\newtheorem{theorem}{Theorem}
\newtheorem{lemma}[theorem]{Lemma}
\newtheorem{definition}[theorem]{Definition}
\newtheorem{claim}{Claim}
\newcommand{\size}[1]{\ensuremath{|#1|}}
\newcommand{\floor}[1]{\ensuremath{\lfloor#1\rfloor}}
\newcommand{\lrA}[1]{\ensuremath{\left(#1\right)}}
\def\OPT{\mbox{OPT}}
\def\P{\mathcal{P}}
\def\T{\mathcal{T}}
\title
{
The APX-hardness of the Traveling Tournament Problem
}
\author
{
Jingyang Zhao\\
University of Electronic Science and Technology of China\\
\texttt{jingyangzhao1020@gmail.com}
\and
Mingyu Xiao\\
University of Electronic Science and Technology of China\\
\texttt{myxiao@uestc.edu.cn}
}
\date{}
\begin{document}

\maketitle

\begin{abstract}
The Traveling Tournament Problem (TTP-$k$) is a well-known benchmark problem in sports scheduling, which asks us to design a double round-robin schedule such that each pair of teams plays one game in each other's home venue, no pair of teams plays each other on two consecutive days, each team plays at most $k$ consecutive home games or away games, and the total traveling distance of all the $n$ teams is minimized. TTP-$k$ allows a polynomial-time approximation scheme when $k=2$ and becomes APX-hard when $k\geq n-1$. In this paper, we reduce the gap by showing that TTP-$k$ is APX-hard for any fixed $k\geq3$.

\medskip
{
\noindent\bf{Keywords}: \rm{Traveling Tournament Problem, APX-hardness, Approximation Algorithms, Sports Scheduling}
}
\end{abstract}

\section{Introduction}
In the field of sports scheduling~\cite{kendall2010scheduling}, the traveling tournament problem (TTP-$k$) is a well-known benchmark problem that was first systematically introduced in~\cite{easton2001traveling}. This problem aims to find a double round-robin tournament satisfying some constraints, minimizing the total distance traveled by all participating teams. In a double round-robin tournament of $n$ teams, where $n$ is required to be even, each team will play two games against each of the other $n-1$ teams, one home game at its home venue and one away game at its opponent's home venue. Additionally, each team should play one game a day, all games need to be scheduled on $2(n-1)$ consecutive days, and so there are exactly $n/2$ games on each day. 
%According to the definition, we know that $n$ is always even. 
For TTP-$k$, we have the following three basic constraints or assumptions on the double round-robin tournament.

\medskip
\noindent
\textbf{Traveling Tournament Problem (TTP-$k$)}
\begin{itemize}
\item \emph{No-repeat}: Two teams cannot play against each other on two consecutive days.
\item \emph{Direct-traveling}: Before the first game, all teams are at home, and they will return home after the last game. Furthermore, a team travels directly from its game venue on the $i$th day to its game venue on the $(i+1)$th day.
\item \emph{At-most-$k$}: Each team can have at most $k$ consecutive home games or away games.
\end{itemize}

The input of TTP-$k$ is a complete graph where each vertex represents a team and the weight between two vertices $a$ and $b$, denoted by $w(a,b)$, is the distance from the home of team $a$ to the home of team $b$. We consider the case where $w$ is a (general) metric function, i.e., $w(a,b)+w(b,c)\geq w(a,c)=w(c,a)\geq 0$ and $w(a,a)=0$ for all vertices $a,b,c$. While this assumption is common, we provide one reason for adopting it: in the real-world, distance functions may be asymmetric or fail to satisfy the triangle inequality, and in such cases, using a metric (which allows for general forms of distance functions) can better model asymmetric and non-metric relationships and make the problem more realistic.

TTP-$k$ is a difficult optimization problem. Readers can refer to \cite{DBLP:journals/eor/BulckGSG20,duran2021sports} for an overview. When $k=1$, the problem is infeasible~\cite{de1988some}. When $k\geq2$, and $n\geq 2$ is an even number, feasible solutions exist. Next, we survey the complexity of TTP-$k$.

For $k=2$, NP-hardness is still unknown but a polynomial-time approximation scheme (PTAS) has already been found~\cite{thielen2012approximation,imahori20211+,DBLP:conf/ijcai/ZhaoX21}. For $k=3$, Thielen and Westphal~\cite{thielen2011complexity} proved NP-hardness by a reduction from 3-SAT. Using a generalized reduction from $k$-SAT, Chatterjee~\cite{DBLP:journals/corr/abs-2110-02300} proved NP-hardness for any fixed $k>3$. 
When $k\geq n-1$, the at-most-$k$ constraint loses its meaning and the traveling distance of a team could be as short as that in the Traveling Salesman Problem (TSP), and we use $k=n-1$ to denote this case.
For $k=n-1$, the problem is also known as the Unconstrained TTP. Bhattacharyya~\cite{bhattacharyya2016complexity} proved NP-hardness of TTP-$(n-1)$ using a reduction from $(1,2)$-TSP, i.e., a special case of TSP where the edge weights are either 1 or 2. Using a linear reduction (L-reduction) from $(1,2)$-TSP, Bendayan \emph{et al.}~\cite{2022apx} further proved APX-hardness of TTP-$(n-1)$. So, there is no PTAS for TTP-$(n-1)$ unless P=NP.

Since TTP-2 admits a PTAS while TTP-$(n-1)$ does not, a natural question is whether TTP-$k$ admits a PTAS or not for any fixed $k\geq 3$.
In this paper, we answer this question by proving that TTP-$k$ is APX-hard for any fixed $k\geq 3$. A summary of the current-known complexities of TTP-$k$ is shown in Table~\ref{overview}.

\begin{table}[ht]
\centering
\begin{tabular}{c|c|c|c|c}
\hline
    TTP-$k$ & $k=2$ & $k=3$ & $3<k=O(1)$ & $k\geq n-1$\\
\hline
    P v.s. NP-hard & Unknown & NP-hard~\cite{thielen2011complexity} & NP-hard~\cite{DBLP:journals/corr/abs-2110-02300} & NP-hard~\cite{bhattacharyya2016complexity}\\
\hline
    PTAS v.s. APX-hard & PTAS~\cite{thielen2012approximation,imahori20211+,DBLP:conf/ijcai/ZhaoX21} & \multicolumn{2}{c|}{APX-hard \textbf{[This Paper]}} & APX-hard~\cite{2022apx}\\
\hline
\end{tabular}
\caption{A summary of the current-known complexities of TTP-$k$.}
\label{overview}
\end{table}

It is worth mentioning that there is a large number of studies on approximation algorithms~\cite{thielen2012approximation,imahori20211+,DBLP:conf/ijcai/ZhaoX21,DBLP:conf/mfcs/XiaoK16,DBLP:conf/cocoon/ZhaoX21,DBLP:conf/atmos/ChatterjeeR21,miyashiro2012approximation,2012LDTTP,hoshino2013approximation}. For $k=2$, the approximation ratios for even $n/2$ and odd $n/2$ have been improved to $(1+3/n)$ and $(1+5/n)$~\cite{zhao2024practical}, which implies that TTP-2 admits a PTAS\footnote{Suppose we have a $(1+c/n)$-approximation algorithm for TTP-2 where $c$ is a constant, we can get a PTAS for TTP-2 as follows: for any constant $\varepsilon>0$, if $n\leq c/\varepsilon$, we can optimally solve TTP-2 in polynomial time by a brute-and-force algorithm; otherwise, we use the $(1+c/n)$-approximation algorithm, which admits an approximation ratio of $1+c/n\leq 1+\varepsilon$.}. For $k=3$ and $k=4$, the approximation ratios are $(139/87+\varepsilon)$~\cite{zhao2022improved} and $(13/8+\varepsilon)$~\cite{DBLP:conf/aaai/ZhaoX25}, respectively. For $5\leq k=o(n)$, the ratio is $(5k-7)/(2k)+O(k/n)$~\cite{yamaguchi2009improved}. For $k=n-1$, Imahori~\emph{et al.}~\cite{imahori2010approximation} proposed a $2.75$-approximation algorithm. At the same time, Westphal and Noparlik~\cite{westphal2014} proposed a $5.875$-approximation algorithm for all $k\geq 4$ and $n\geq 6$. This was later improved by Zhao and Xiao~\cite{DBLP:journals/corr/abs-2309-01902} to a $5$-approximation for any feasible choice of $k$ and $n$. Note that when $k$ is sufficiently large, the latter four approximation ratios can be slightly improved due to the $(3/2-10^{-36})$-approximation algorithm for TSP~\cite{DBLP:conf/stoc/KarlinKG21}.
%Note that when $k$ is sufficiently large the latter three approximation ratios can be slightly improved due to the randomized $(3/2-\varepsilon_0)$-approximation algorithm for TSP, where $\varepsilon_0$ is a value of about $10^{-36}$~\cite{DBLP:conf/stoc/KarlinKG21}.

\section{Preliminary}
We always consider $k$ as a constant. A simple path on $k$ different vertices is called a \emph{$k$-path}. A \emph{$k$-path packing} in a graph of $n$ vertices is a set of vertex-disjoint $n/k$ $k$-paths such that all vertices in the graph are covered. Note that a $k$-path packing exists only if $n\bmod k=0$. A weight function is called an \emph{integer} function if it only outputs non-negative and constant integers. If the input is a graph for an instance $I$, we use $I=(V_I,E_I)$ to denote the graph where $V_I$ denotes the vertex set and $E_I$ denotes the edge set of the graph. For any functions $f(n)$ and $g(n)$, we write $f(n)=\Omega(g(n))$ (resp., $f(n)=\Theta(g(n))$) to indicate that there exist constants $\alpha,n_0>0$ (resp., $n_0,\alpha',\alpha>0$) such that $f(n)\geq \alpha\cdot g(n)$ (resp., $\alpha'\cdot g(n)\leq f(n)\leq \alpha\cdot g(n)$) for all $n\geq n_0$.

\subsection{APX-hardness and L-reduction}
An optimization problem belongs to the APX class if it admits an $O(1)$-approximation algorithm. Moreover, it has a PTAS if it has a $(1+\varepsilon)$-approximation algorithm for any constant $\varepsilon>0$. A problem is APX-hard if for every problem in APX there is a PTAS-reduction to it. To prove APX-hardness of a problem, we may use an L-reduction from an APX-hard problem, where an L-reduction directly implies a PTAS reduction~\cite{crescenzi1997short}.
For any instance $I$ of any optimization problem, let $\OPT(I)$ denote the optimal value of the optimal solution of $I$. 
The definition of L-reduction is given as follows.

\begin{definition}[\cite{williamson2011design}]\label{L-reduction}
Given two optimization problems $\Pi$ and $\Xi$, we have an L-reduction from $\Pi$ to $\Xi$ if for some constants $\alpha, \beta > 0$:
\begin{enumerate}
    \item For each instance $I$ of $\Pi$ we can compute in polynomial time an instance $J$ of $\Xi$;
    \item $\OPT(J)\leq \alpha\cdot\OPT(I)$;
    \item Given a solution of weight $S(J)$ to $J$, we can compute in polynomial time a solution of weight $S(I)$ to $I$ such that
    \[
    \size{S(I)-\OPT(I)}\leq \beta\cdot\size{S(J)-\OPT(J)}.
    \]
\end{enumerate}
\end{definition}

\subsection{The Framework of the Reduction}
To establish APX-hardness of TTP-$k$, we employ a well-known APX-hard problem called the $k$-tour cover problem ($k$-TC). However, instead of directly using $k$-TC, we consider a variant called Boosted Restricted $k$-TC. We first show that Boosted Restricted $k$-TC remains APX-hard (see Section~\ref{introktc}). 

In our L-reduction from Boosted Restricted $k$-TC to TTP-$k$, given an instance $I$ of Boosted Restricted $k$-TC, we first construct an instance $I'$ of $k$-TC satisfying some desired properties and then construct an instance $J$ of TTP-$k$ in polynomial time. The details of the constructions are explained in Section~\ref{condition1}.
By further proving the last two properties of the L-reduction in Sections~\ref{condition2} and \ref{condition3} respectively, we prove APX-hardness of TTP-$k$.

\begin{theorem}\label{main}
TTP-$k$ is APX-hard for any fixed $k\geq 3$.
\end{theorem}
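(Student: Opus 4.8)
The plan is to establish Theorem~\ref{main} via the L-reduction outlined in the preliminaries, so the proof is essentially a matter of assembling three ingredients and checking that their composition satisfies Definition~\ref{L-reduction}. First I would invoke the APX-hardness of Boosted Restricted $k$-TC (proved in Section~\ref{introktc}) as the source problem $\Pi$. Given an instance $I$ of $\Pi$, the first step (Section~\ref{condition1}) constructs, in polynomial time, an intermediate instance $I'$ of ordinary $k$-TC with carefully controlled structural parameters (e.g.\ bounded edge weights via an integer weight function, a vertex count divisible by $k$ so that $k$-path packings exist, and a guaranteed near-optimal solution built from $k$-paths), and then lifts $I'$ to an instance $J$ of TTP-$k$ on roughly the same number of teams. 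The ``boosted'' and ``restricted'' modifications to $k$-TC are precisely what make this lifting clean: restricting the instance shape lets us embed the tour structure into a double round-robin schedule, and boosting the weights ensures the additive travel overhead incurred by the round-robin constraints (the no-repeat and bounded-by-$k$ requirements, plus the forced home/away alternation) is dominated by the $k$-TC cost, which is what keeps the reduction \emph{linear} rather than merely approximation-preserving in a weaker sense.

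Next I would verify condition~2 of Definition~\ref{L-reduction}, namely $\OPT(J)\le\alpha\cdot\OPT(I)$ for an absolute constant $\alpha$; this is the content of Section~\ref{condition2}. The natural route is to upper-bound $\OPT(J)$ by the cost of a specific schedule assembled from a near-optimal $k$-path packing of $I'$ — each team's itinerary traces out $k$-paths, and the extra trips home required to satisfy the $k$-consecutive constraint contribute only an $O(1)$ multiplicative factor because the boosting step guarantees $\OPT(I')$ (hence $\OPT(I)$) is large relative to the number of teams and the individual edge weights. Simultaneously one shows $\OPT(I')=\Theta(\OPT(I))$ so that the intermediate step does not lose linearity. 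Then for condition~3 (Section~\ref{condition3}) I would describe the back-translation: from any feasible double round-robin schedule of cost $S(J)$, extract a collection of $k$-tours whose total weight $S(I)$ satisfies $\size{S(I)-\OPT(I)}\le\beta\cdot\size{S(J)-\OPT(J)}$ — concretely, each team's road trip of at most $k$ away games yields a $k$-tour (or a short tour closed up by triangle inequality), and the gap between $S(J)$ and $\OPT(J)$ transfers, up to the constant $\beta$, to the gap between the induced $k$-TC solution and its optimum, which in turn bounds the gap for $I$.

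The main obstacle I expect is condition~3: turning an \emph{arbitrary} (possibly adversarial) TTP-$k$ schedule back into a good $k$-TC solution. A generic schedule need not decompose its travel neatly into home-anchored $k$-paths — road trips can be shorter than $k$, the no-repeat constraint couples the itineraries of different teams, and a clever schedule might ``hide'' slack by rearranging games rather than by taking shorter trips. The delicate part is a structural lemma showing that any such rearrangement either already corresponds to a valid $k$-tour cover of comparable weight, or can be locally repaired (using triangle inequality to splice trips and the restricted structure of $I'$ to absorb boundary effects) at a cost that is only a constant times the schedule's excess over $\OPT(J)$; getting the charging argument to be genuinely linear, rather than incurring an additive term proportional to the number of teams, is where the boosting hypothesis must be used a second time. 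Once that lemma is in hand, chaining the three conditions through the two-stage reduction $\Pi\to k\text{-TC}\to\text{TTP-}k$ and composing the constants gives an L-reduction from an APX-hard problem to TTP-$k$, and since L-reductions compose and imply PTAS-reductions~\cite{crescenzi1997short}, TTP-$k$ is APX-hard for every fixed $k\ge3$.
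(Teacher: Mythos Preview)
Your high-level framework (L-reduction from Boosted Restricted $k$-TC, verifying the three conditions of Definition~\ref{L-reduction}) matches the paper's, but you have misidentified the central construction, and this misidentification propagates into a mistaken view of both what ``boosting'' means and why condition~3 goes through.

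The key point you miss is the size of $J$. You write that $J$ is built ``on roughly the same number of teams'' as $I'$; in fact the paper blows $I'$ (with $m+1$ vertices) up to $J$ with $m^3$ teams, by placing $m(m^2-1)$ \emph{dummy} teams all at the depot position. This is precisely what ``boosting'' means here: it is not a scaling of edge weights, but a multiplication of the objective by $m(m^2-1)$, realized geometrically by having that many identical teams whose travel cost each equals a $k$-TC solution value. With this in hand, condition~2 is proved by exhibiting a schedule (built from a super-team rotation and the optimal $k$-path packing $\P^*_k$) in which every dummy team travels exactly $\OPT(I)$, contributing $m(m^2-1)\cdot\OPT(I)$, while the $m$ non-dummy teams and the within-super-team TTP-2 block contribute a lower-order $O(n^3)$ overhead against a main term of order $n^4$.

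Consequently, condition~3 is far simpler than the ``delicate structural lemma'' you anticipate. Given \emph{any} schedule for $J$, each dummy team sits at the depot, so its entire itinerary---after shortcutting the other dummy vertices---is already a feasible $k$-TC solution for $I$ (every away trip has length $\le k$ and starts/ends at the depot). Taking $S(I)$ to be the best of these $m(m^2-1)$ solutions gives $\sum_{x\in D_J} w(T_x)\ge m(m^2-1)\cdot S(I)$ by averaging; no repair or charging argument is needed. The only subtraction in the gap bound is the $(4m+6)\sum_x w(o,x)=\Theta(n^2)$ slack from the upper bound on $\OPT(J)$, and this is where the integer-weight restriction is used: if $S(I)>\OPT(I)$ then $S(I)-\OPT(I)\ge 1$, so $m(m^2-1)(S(I)-\OPT(I))=\Omega(n^3)$ swallows the $\Theta(n^2)$ term and yields the linear inequality. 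Your proposed route---extracting tours from road trips of a schedule on $\sim m$ teams and controlling boundary effects by weight-boosting---would not achieve this, because without the cubic blow-up the additive schedule overhead is of the same order as the signal.
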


\section{The Boosted Restricted $k$-Tour Cover Problem}\label{introktc}
In TTP-$k$, the well-known Independent Lower Bound~\cite{easton2003solving,urrutia2007new} is computed by determining the minimum possible traveling distance for each team, independently of the feasibility of the other teams, and then summing these distances. Furthermore, to compute the minimum possible traveling distance for a single team, we need to solve $k$-TC~\cite{asano1997covering}, defined as follows. 

\begin{definition}[$k$-TC]\label{ktc}
An instance of $k$-TC can be presented by a complete graph $I=(V_I, E_I)$, where there is one depot vertex $o$ in $V_I$, $n-1$ non-depot vertices in $V_I\setminus\{o\}$, and a metric weight function $w$ on the edges in $E_I$. The goal of $k$-TC is to find a set of tours of minimum total weight covering all non-depot vertices where each tour is a simple cycle that contains the depot vertex and at most $k$ non-depot vertices.
\end{definition}

$k$-TC, also known as the (unit-demand) Capacitated Vehicle Routing Problem~\cite{dantzig1959truck}, is APX-hard for any fixed $k\geq 3$~\cite{asano1996covering,asano1996covering+}.

%\footnote{The paper~\cite{asano1996covering} may not be easily accessible or searchable through specific platforms. Here, we refer to an alternative version of the paper titled in Japanese but written in English~\cite{asano1996covering+}, which can be accessed through Google Scholar.}.

\begin{lemma}[\cite{asano1996covering,asano1996covering+}]\label{apxx}
$k$-TC is APX-hard for any fixed $k\geq 3$.
\end{lemma}

The technical report~\cite{asano1996covering} is unavailable; however, an alternative version exists with a Japanese title~\cite{asano1996covering+}. 
%Unfortunately, the proof provided in~\cite{asano1996covering+} is incomplete. 
For the sake of completeness, we restate their proof and explain more of the details in Appendix~\ref{ADX}.

Given the relationship between TTP-$k$ and $k$-TC, one potential approach to prove APX-hardness of TTP-$k$ is through an L-reduction from $k$-TC to TTP-$k$. However, directly performing such a reduction is not a straightforward task. We first introduce a restricted version of $k$-TC, denoted as Restricted $k$-TC. It was implicitly used and proved to be APX-hard for any fixed $k\geq 3$ in~\cite{asano1996covering}.

\begin{definition}[Restricted $k$-TC]\label{restr}
Compared to the definition of $k$-TC, Restricted $k$-TC introduces an additional constraint on the weight function $w$. Specifically, the weight function $w$ is required to be an integer metric function, and it must satisfy the condition $w(o,v)=\Theta(1)$ for all $v\in V_I\setminus\{o\}$.
\end{definition}

Definition~\ref{restr} is motivated by the following fact.
The outputs of the weight function used in~\cite{asano1996covering} (see also Lemma~\ref{apxx}) are all constant fractions (related to $k$), e.g., $\frac{3k-7}{2k-4}$. Moreover, the weights of edges incident to the depot vertex are all positive. Therefore, by scaling up a positive constant integer (related to $k$), we can assume w.l.o.g.\ that the weight function is an integer function and it holds $w(o,v)=\Theta(1)$ for every $v\in V_I\setminus\{o\}$. 
For any solution of Restricted $k$-TC with weight $S(I)$, we can get $S(I)-\OPT(I)=\Omega(1)$ if $S(I)\neq \OPT(I)$. Moreover, the solution of Restricted $k$-TC has the following property.

\begin{lemma}\label{bounded}
For any instance $I$ of Restricted $k$-TC, any feasible solution has a weight of $\Theta(n)$.
\end{lemma}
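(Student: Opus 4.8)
The claim is that for any instance $I$ of Restricted $k$-TC, every feasible solution has weight $\Theta(n)$, where $n=|V_I|$. The plan is to prove the upper and lower bounds separately, both exploiting the defining feature of Restricted $k$-TC: the weight function is integer-valued with $w(o,v)=\Theta(1)$ for all non-depot $v$, and (since $w$ is a fixed integer function in the sense of the preliminaries) all edge weights are bounded above by some constant $W$ depending only on $k$.

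For the upper bound, I would exhibit a cheap feasible solution and invoke optimality. Partition the $n-1$ non-depot vertices into $\lceil (n-1)/k \rceil$ groups of at most $k$ vertices each, and for each group form a tour that leaves the depot, visits the vertices of that group in some order, and returns. Each such tour uses at most $k+1$ edges, each of weight at most $W$, so its weight is at most $(k+1)W$; summing over the $O(n/k)$ tours gives total weight $O(n)$ (the constant absorbs $k$ and $W$). Hence $\OPT(I)=O(n)$, and a fortiori any feasible solution we might start from to build our argument is not what we need here — rather, we need that the \emph{minimum} feasible weight is $O(n)$; but the lemma asserts \emph{every} feasible solution has weight $\Theta(n)$, so I also need an $O(n)$ upper bound on an arbitrary feasible solution. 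That follows because a feasible solution consists of at least $\lceil(n-1)/k\rceil$ tours and at most $n-1$ tours (each tour covers at least one non-depot vertex), each tour is a simple cycle through the depot with at most $k$ non-depot vertices and hence at most $k+1$ edges; so the total number of edges used is at most $(n-1)(k+1)$, giving total weight at most $(n-1)(k+1)W = O(n)$.

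For the lower bound, every non-depot vertex $v$ lies on some tour, and on that tour $v$ has two incident edges; by the triangle inequality the sum of the weights of these two edges is at least $w(o,v)$ if one traverses from the two neighbors of $v$ around through... actually the cleanest route is: each tour containing $v$ together with at least one other vertex contributes, and a tour consisting of the single vertex $v$ has weight exactly $2w(o,v)=\Omega(1)$. More uniformly: sum over all tours of (number of non-depot vertices on the tour) $= n-1$, and each tour of length $m$ (i.e.\ with $m$ non-depot vertices) has weight at least $2\min_v w(o,v) = \Omega(1)$ when $m\ge 1$, but this only gives $\Omega(n/k)=\Omega(n)$ after noting there are at least $\lceil (n-1)/k\rceil$ tours. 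So: the number of tours is at least $(n-1)/k$, each has weight at least $2c$ where $c=\min_{v}w(o,v)>0$ is a positive constant (positive because $w(o,v)=\Theta(1)$ and integer-valued forces $w(o,v)\ge 1$), hence the total weight is at least $2c(n-1)/k = \Omega(n)$.

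Combining, every feasible solution has weight at least $\Omega(n)$ and at most $O(n)$, i.e.\ $\Theta(n)$. I do not anticipate a genuine obstacle here; the only mild subtlety is being careful that ``integer function'' as defined in the preliminaries guarantees both a uniform constant \emph{upper} bound $W$ on all edge weights (needed for the $O(n)$ direction) and, together with $w(o,v)=\Theta(1)$, a positive constant \emph{lower} bound $c\ge 1$ on depot edges (needed for the $\Omega(n)$ direction); both follow directly from the stated definitions, so the proof is short.
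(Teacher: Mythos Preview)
Your proposal is correct and follows essentially the same approach as the paper: the lower bound via counting tours (at least $\lfloor (n-1)/k\rfloor$ of them, each contributing $\Omega(1)$ from its two depot edges) is identical. For the upper bound the paper argues slightly differently, using the triangle inequality to bound any tour's weight by $2\sum_i w(o,v_i)$ over its non-depot vertices, hence any feasible solution by $2\sum_{v\in V_I} w(o,v)=\Theta(n)$; your edge-counting with the uniform constant bound $W$ on all edge weights (valid since the weight function outputs only constant integers) is an equally short alternative.
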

\begin{proof}
Since each tour covers at most $k$ non-depot vertices, to cover all $n-1$ non-depot vertices, any feasible solution must contain at least $\floor{\frac{n-1}{k}}$ tours and thus use at least $2\cdot\floor{\frac{n-1}{k}}$ edges incident to the depot vertex. By Definition~\ref{restr}, $w(o,v)=\Theta(1)$ for any $v\in V_I\setminus\{o\}$. Hence, a lower bound of any feasible solution is $\Theta(n)$. Moreover, by the triangle inequality, any feasible solution has a weight of at most $2\sum_{v\in V_I}w(o,v)=\Theta(n)$, i.e., in the worst solution, each tour only covers a single non-depot vertex. Therefore, any feasible solution has a weight of $\Theta(n)$.
\end{proof}

Then, we introduce a boosted version of Restricted $k$-TC, denoted as Boosted Restricted $k$-TC. 

\begin{definition}[Boosted Restricted $k$-TC]\label{brestr}
Given an instance $I$ of Restricted $k$-TC, the objective of Boosted Restricted $k$-TC is to find a solution $I$ of Restricted $k$-TC with weight $S(I)$ that minimizes the expression $m(m^2-1)\cdot S(I)$, where $m=\Theta(n)$ (to be defined later).
\end{definition}

Boosted Restricted $k$-TC differs from Restricted $k$-TC only in the objective function, so any feasible solution to the former is a feasible solution to the latter, and vice versa. They also share the same input.
Note that our L-reduction is from Boosted Restricted $k$-TC to TTP-$k$, as this is easier to present than a direct L-reduction from Restricted $k$-TC to TTP-$k$. For convenience, we formally define this problem. 
%so the input and the output for Boosted Restricted $k$-TC are the same as those for Restricted $k$-TC.
%Recall that our L-reduction is from Boosted Restricted $k$-TC to TTP-$k$, which is easier to describe we formally define this problem.
%because this way is more nature.
%though its APX-hardness can be easily obtained by APX-hardness of Restricted $k$-TC.

The concept of the ``boosted method" has been used previously in proving APX-hardness of TTP-$(n-1)$ in~\cite{2022apx}. This method ensures that Boosted Restricted $k$-TC remains APX-hard for any fixed $k\geq3$. Otherwise, there exists a $(1+\varepsilon)$-approximation algorithm for Boosted Restricted $k$-TC, and it can generate a solution $S$ for Restricted $k$-TC with a weight of $S(I)$ satisfying $m(m^2-1)\cdot S(I)\leq (1+\varepsilon)\cdot m(m^2-1)\cdot OPT(I)$, i.e., $S(I)\leq (1+\varepsilon)\cdot OPT(I)$, which implies that the approximation algorithm is also a $(1+\varepsilon)$-approximation algorithm for Restricted $k$-TC, contradicting APX-hardness of Restricted $k$-TC for any fixed $k\geq 3$.

Next, we show the details of the L-reduction from Boosted Restricted $k$-TC to TTP-$k$. By Definition~\ref{L-reduction}, there are three conditions in the L-reduction. We will consider them respectively.

\section{The Construction of the Instance}\label{condition1}
Given an instance $I$ of Restricted $k$-TC, we make some modifications to obtain an instance $I'$ of $k$-TC and then show how to construct an instance $J$ of TTP-$k$, both in polynomial time.

An \emph{$l$-tour} is a simple cycle that contains the depot vertex and exactly $l$ non-depot vertices. A solution of an instance of $k$-TC that contains only $k$-tours is called a \emph{saturated} solution. An instance of $k$-TC is called a \emph{saturated} instance if it has an optimal saturated solution.
Not every instance $I$ of Restricted $k$-TC is saturated.
However, by adding some new non-depot vertices on the depot position, we can get an equivalent saturated instance $I'$ of $k$-TC, i.e., there is an optimal saturated solution with the same weight of the optimal solution in $I$. In the following, we use $m$ to denote the number of non-depot vertices in $I'$.

\begin{lemma}%[\cite{DBLP:journals/corr/abs-2210-16534}]
Given an instance $I$ of Restricted $k$-TC, there is a polynomial-time algorithm to get an equivalent saturated instance $I'$ of $k$-TC with an even number of non-depot vertices.
\end{lemma}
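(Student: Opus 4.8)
The plan is to obtain $I'$ from $I$ by inserting \emph{dummy} non-depot vertices that sit exactly on the depot: each dummy $d$ gets weight $w(o,d)=0$, $w(d,d')=0$ for dummies $d,d'$, and $w(d,v)=w(o,v)$ for every original vertex $v$. Since every dummy then behaves as a literal copy of $o$, symmetry, integrality, and the triangle inequality for $w$ on $I'$ all reduce to inequalities already holding in $I$, so $I'$ is a legitimate $k$-TC instance and is trivially computable in polynomial time (the only work is writing down $\Theta(n)$ extra vertices). The number of dummies is chosen so that the resulting number $m$ of non-depot vertices is even, is divisible by $k$, satisfies $m/k\ge n-1$, and is still $\Theta(n)$; for instance $m=\operatorname{lcm}(2,k)\cdot n$ works, as $k$ is a constant.

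I would first record that $\OPT(I')=\OPT(I)$. For $\OPT(I')\le\OPT(I)$, any solution of $I$ becomes a solution of $I'$ by partitioning the $m-(n-1)$ dummies into extra tours of at most $k$ dummies each; these tours use only zero-weight edges, so the total weight is unchanged. For $\OPT(I')\ge\OPT(I)$, take any solution of $I'$ and delete the dummies one at a time: deleting a dummy $d$ that lies between vertices $x$ and $y$ on some tour replaces the edges $xd$ and $dy$, of total weight $w(x,o)+w(o,y)$, by the single edge $xy$ of weight $w(x,y)\le w(x,o)+w(o,y)$, while a tour that becomes entirely made of dummies has weight $0$ and can be discarded; hence the weight does not increase and we are left with a solution of $I$ of no larger weight.

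Next I would build an optimal \emph{saturated} solution of $I'$ to witness that $I'$ is an equivalent saturated instance. Fix any optimal solution of $I$ (we only need its existence, not its computation), with tours $T_1,\dots,T_t$ covering $a_1,\dots,a_t$ non-depot vertices, so $\sum_i a_i=n-1$, $1\le a_i\le k$, and therefore $t\le n-1\le m/k$. Pad each $T_i$ with $k-a_i$ dummies, and create $m/k-t$ further tours, each consisting of exactly $k$ dummies. This uses $\sum_i(k-a_i)+k\,(m/k-t)=m-(n-1)$ dummies in total, turns every tour into a $k$-tour, and only ever adds zero-weight edges, so the weight stays $\OPT(I)=\OPT(I')$; the resulting solution is an optimal saturated solution of $I'$. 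By construction $m$ is even.

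The one step that genuinely dictates the careful choice of $m$ is this last construction: padding existing tours and appending all-dummy tours is essentially the only cost-free move, because one cannot in general \emph{split} a depot-tour without increasing its weight (triangle inequality at $o$). So the argument works only because we arranged both that there are exactly enough dummies and, crucially, that the tour count $t$ of the lifted optimal solution is at most $m/k$ — hence $m/k\ge n-1$ while keeping $m=\Theta(n)$. Everything else (verifying symmetry and the triangle inequality on $I'$, and that none of the operations above raises the cost) is routine.
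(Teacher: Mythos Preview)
Your proof is correct and follows essentially the same approach as the paper: add enough dummy non-depot vertices at the depot position so that the total count is even and divisible by $k$, then observe that any optimal solution of $I$ can be padded to a saturated optimal solution of $I'$ of the same weight. Your choice $m=\operatorname{lcm}(2,k)\cdot n$ with the bound $t\le n-1\le m/k$ is a slightly cleaner bookkeeping than the paper's $m=n-1+nk^2+k-(n-1)\bmod k$ (the paper instead bounds the number of padding dummies by $\sum_{l<k}c_l(k-l)<nk^2$), and you are more explicit that only the \emph{existence} of the optimal solution is used, not its computation; one small imprecision is the phrase ``only ever adds zero-weight edges''---inserting a dummy next to $o$ replaces the edge $o\,v_1$ by $o\,d$ and $d\,v_1$, and the latter has weight $w(o,v_1)$, so the net added weight is zero rather than every new edge being zero.
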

\begin{proof}
%We include the proof for completeness.
%
Suppose that there is an optimal solution in $I$ that contains $c_l$ of $l$-tours, where $1\leq l\leq k$. We have $n-1=\sum_{l\leq k}c_ll$. Firstly, we can see that after adding some new non-depot vertices on the depot position we can get an equivalent new instance. If we add $\sum_{l<k}c_l(k-l)$ new non-depot vertices on the depot position, the optimal solution in $I$ corresponds to an optimal saturated solution in the new instance since we can fill each $l$-tour in the optimal solution in $I$ with $k-l$ of the new non-depot vertices. Then, if we further add $n'$ with $n'\bmod k=0$ new non-depot vertices on the depot position, there still exists an optimal saturated solution in the new instance since the new non-depot vertices can be covered by $n'/k$ new $k$-tours with a zero weight. 

We have $\sum_{l<k}c_l(k-l)<\sum_{l<k}nk<nk^2$ since $c_l<n$. So, let $n'=nk^2+k-((n-1)\bmod k)-\sum_{l<k}c_l(k-l)$, and then we have $n'>k-((n-1)\bmod k)>0$. Moreover, since $n-1=\sum_{l\leq k}c_ll$, we have $(n-1+\sum_{l<k}c_l(k-l))\bmod k=0$, and hence $n'\bmod k=0$. 

So, given the instance $I$, by adding $\sum_{l<k}c_l(k-l)+n'=nk^2+k-((n-1)\bmod k)$ new non-depot vertices on the depot position, we can get an equivalent saturated instance $I'$ of $k$-TC. Note that there is 1 depot vertex and $m\coloneqq n-1+nk^2+k-((n-1)\bmod k)$ non-depot vertices in $I'$. We assume w.l.o.g.\ that $m\bmod 2=0$. Otherwise, we have $k\bmod2=1$ since $m\bmod k=0$, and in this case we can further add $k$ new non-depot vertices on the depot position (these $k$ vertices can be covered by using a $k$-tour with a zero weight). Then, $m\leftarrow m+k$ would be even. Since $k=O(1)$, $I'$ can be obtained in polynomial time.
\end{proof}

Note that the original instance $I$ of Restricted $k$-TC has $n$ vertices, while the modified instance $I'$ of $k$-TC has $m+1$ vertices, including one depot vertex and 
\begin{equation}\label{m=tn}
m=\Theta(n) 
\end{equation} 
non-depot vertices. Specifically, if $n-1+nk^2+k-((n-1)\bmod k)$ is even, we have $m=n-1+nk^2+k-((n-1)\bmod k)$; otherwise, we have $m=n-1+nk^2+2k-((n-1)\bmod k)$.

To construct an instance $J$ of TTP-$k$, we further add $m(m^2-1)-1$ new vertices on the depot position of $I'$. So, $J$ has $m^3$ vertices, each representing a team. Among these vertices, $m(m^2-1)$ vertices are considered as \emph{dummy} vertices, including one depot vertex and $m(m^2-1)-1$ new vertices. This construction can be achieved in polynomial time. So, we have the following result.

\begin{lemma}\label{l1}
Given an instance $I$ of Restricted $k$-TC, we can construct the above instance $J$ of TTP-$k$ in polynomial time.
\end{lemma}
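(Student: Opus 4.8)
The plan is to show two things: that every step of the construction that precedes this lemma runs in time polynomial in the size of $I$, and that the weighted graph $J$ it produces is a legitimate input to TTP-$k$. First I would invoke the preceding lemma to obtain, in polynomial time, the saturated instance $I'$ of $k$-TC with one depot vertex and $m=n-1+nk^2+k-(n-1)\bmod k$ non-depot vertices; since $k=O(1)$ this gives $m=\Theta(n)$, so $I'$ has $\Theta(n)$ vertices, and its weight function (either inherited from $I$ or set to $0$ on the edges incident to the freshly placed depot-copies) is written down in polynomial time. Then, passing to $J$, we append $m(m^2-1)-1=\Theta(n^3)$ dummy vertices at the depot position, so $J$ has $(m+1)+(m(m^2-1)-1)=m^3=\Theta(n^3)$ vertices and $\binom{m^3}{2}=\Theta(n^6)$ edges, each edge weight being either an edge weight of $I'$, or $w(o,\cdot)$, or $0$; hence $J$ is produced in polynomial time.

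Next I would check the three structural requirements that make $J$ a valid TTP-$k$ instance. The number of teams $m^3$ is even, because the preceding lemma guarantees $m$ is even. The weight function of $J$ is symmetric, since it is assembled from the symmetric weight function of $I$ by repeatedly placing vertices at the depot position. Finally, the weight function obeys the triangle inequality: putting a dummy vertex ``on the depot position'' means giving it distance $0$ to the depot and distance $w(o,v)$ to every other vertex $v$ — exactly the distances that $o$ has — so after identifying all dummy vertices with $o$, every triangle of $J$ becomes a triangle of $I'$ (degenerate if it involved two dummies), and any triangle of $J$ containing no dummy vertex already lies in $I'$; thus it suffices that $I'$, equivalently $I$, is metric, which we may assume for Restricted $k$-TC by the normalization already discussed (rescale to integer weights and, if needed, add a constant $\Theta(1)$ offset to every edge, which preserves integrality and $w(o,v)=\Theta(1)$).

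I do not expect a genuine obstacle here; this lemma is essentially bookkeeping, and the only points worth stating explicitly are (a) that the total number of vertices added in the preceding lemma is a fixed polynomial in $n$ that does not depend on the unknown optimal solution, so $I'$ is genuinely computable, (b) the parity of $m$, and (c) the observation that dummy vertices, being exact copies of the depot, can never create a triangle-inequality violation. I would also record for later use that of the $m^3$ teams exactly $m(m^2-1)$ are dummy — the depot together with the $m(m^2-1)-1$ newly added vertices, all located at the depot position — while the remaining $m$ teams are the non-depot vertices of $I'$.
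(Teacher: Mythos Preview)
Your proposal is correct and considerably more thorough than the paper, which offers no separate proof: the lemma is stated immediately after the sentence ``This construction can be achieved in polynomial time, leading to the following lemma.'' Your explicit checks that $m^3$ is even (from $m$ even) and that placing dummy copies at the depot position preserves symmetry and the triangle inequality are sound and fill in details the paper leaves implicit.

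One caveat: the parenthetical about adding a $\Theta(1)$ offset to every edge of $I$ to enforce metricity should be dropped. It is unnecessary --- the hard instances of Restricted $k$-TC inherited from \cite{asano1996covering} are already metric (indeed the paper freely invokes the triangle inequality in the proof of Lemma~\ref{bounded}) --- and it is also problematic, because shifting all edge weights of $I$ changes the optimal value by an amount that depends on the number of tours in the solution, so such a modification could not be treated as a harmless local normalization but would have to be threaded through the entire L-reduction. Simply take metricity of $I$ as part of the standing hypotheses.
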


An illustration of the relations for instances $I$, $I'$ and $J$ can be seen in Figure~\ref{fig01}. Both $I'$ and $J$ are obtained by adding some vertices on the depot position of $I$.

\begin{figure}[ht]
\centering
\begin{tikzpicture}
\draw (0,0) circle (30pt);
\draw (1.43,0) ellipse [x radius=30pt, y radius=20pt];
\draw (0,0.53) circle (15pt);
\draw (0,-0.53) circle (15pt);

\node  at (-0.75,0) {C};
\node  at (0,0.9) {A};
\node  at (0,-0.9) {B};
\node  at (2,0) {$I$};

\draw
(0,0.2) circle (2pt)
%(-0.1,0.4) circle (2pt) (0.1,0.4) circle (2pt)
(-0.2,0.6) circle (2pt) (0,0.6) circle (2pt) (0.2,0.6) circle (2pt)
(0,-0.2) circle (2pt)
%(-0.1,-0.4) circle (2pt) (0.1,-0.4) circle (2pt)
(-0.2,-0.6) circle (2pt) (0,-0.6) circle (2pt) (0.2,-0.6) circle (2pt);
\node   (up)  at (0.04, 0.4) {$\dots$};
\node   (up)  at (0.04, -0.4) {$\dots$};
\node   (up)  at (1.45, 0.14) {$\vdots$};

\filldraw[black] 
(0.8,0) circle (2pt);
\draw 
(1.25,0) circle (2pt)
%(1.45,-0.1) circle (2pt) (1.45,0.1) circle (2pt)
(1.65,-0.2) circle (2pt) (1.65,0) circle (2pt) (1.65,0.2) circle (2pt);
\end{tikzpicture}
\caption{An illustration of the relations for instances $I$, $I'$, and $J$, where cycle $C$ denotes the depot position, cycle $A$ denotes $m-(n-1)$ vertices on the depot, and cycle $B$ denotes $m(m^2-1)-1$ vertices on the depot: the ellipse can be seen as instance $I$ (containing 1 depot vertex, denoted by the black node, and $n-1$ non-depot vertices, denoted by the white nodes), the ellipse with  vertices in $A$ (resp., $A$ and $B$) can be seen as instance $I'$ (resp., $J$).}
\label{fig01}
\end{figure}

\section{The Second Property}\label{condition2}
In this section, we prove the second property of the L-reduction in Definition~\ref{L-reduction}. Note that $I$ is an instance of Restricted $k$-TC and our L-reduction is from Boosted Restricted $k$-TC. Hence, we need to prove that $\OPT(J)\leq \alpha\cdot m(m^2-1)\cdot\OPT(I)$, where $\alpha$ is a constant.

Recall that in $I'$ of $k$-TC there always exists an optimal saturated solution that contains only $k$-tours. So, such a solution corresponds to a $k$-path packing in the graph induced by the $m$ non-depot vertices, denoted by $\P^*_k$. Based on this $k$-path packing, we will construct a solution for instance $J$ of TTP-$k$, which is an upper bound of $OPT(J)$.

\subsection{The construction of the schedule}
There are $d\coloneqq m/k$ $k$-paths in $\P^*_k$. We will pack them as a \emph{super-team}, denoted by $U_1$. There are still $m(m^2-1)$ dummy vertices on the depot position, which will be divided into $m(m^2-1)/k$ $k$-paths with a zero weight and then further divided into $m^2-1$ dummy super-teams by arbitrarily packing $d$ $k$-paths as a super-team, denoted by $\{U_2,\dots,U_{m^2}\}$. Hence, there are $m^2$ super-teams in total, and in each super-team there are $d$ $k$-paths and hence $m$ vertices.

If a dummy team in $U_i$ ($i>1$) (i.e., a dummy vertex with its position on the depot position) always plays $k$ consecutive away games along each $k$-path in $U_1$ then its total traveling distance is exactly $\OPT(I')=\OPT(I)$. We will construct a feasible solution for TTP-$k$ such that all dummy teams play $k$ consecutive away games along each $k$-path in $U_1$. Our construction is based on the $k$-path packing construction in~\cite{pathpacking2023}. The main idea is to first arrange \emph{super-games} between super-teams and then extend the super-games into normal games.

The super-game schedule contains $m^2$ \emph{blocks}. The designs in the first and last blocks are slightly different from other blocks. We first consider the first $m^2-1$ blocks. In each of the first $m^2-1$ blocks there are $m^2/2$ super-games. In the first block, the $m^2/2$ super-games can be seen in Figure~\ref{fig02}, where each super-game is represented by a directed edge and all super-games are \emph{normal} super-games. The last super-team is denoted by a double-cycle node, while the others are denoted by single-cycle nodes.

\begin{figure}[ht]
\centering
\resizebox{0.6\textwidth}{!}{
\begin{tikzpicture}
[
leftsuperteam/.style={circle, draw=black!100, very thick, minimum size=12.8mm, inner sep=0pt},
normalsuperteam/.style={circle, draw=black!100, very thick, minimum size=12.8mm, inner sep=0pt},
rightsuperteam/.style={draw=black!100, very thick, minimum size=12.8mm, inner sep=0pt},
cyc/.style={circle, draw=black!100, very thick, minimum size=11mm, inner sep=0pt},
]
\node[leftsuperteam]        (u9)    at (-1.2,0) {$U_{m^2}$};
\node[cyc]        (u9+)   at (-1.2,0) {};
\node[normalsuperteam]      (u8)   at (1,0) {$U_{1}$};
\draw[very thick,->] (u9.east) to (u8.west);

\node[normalsuperteam]      (up)    at (1+1.5*1, 0.9555) {$U_2$};
\node[normalsuperteam]      (down)    at (1+1.5*1, -0.9555) {$U_{m^2-1}$};
\draw[very thick,<-] (down.north) to (up.south);

\node[normalsuperteam]      (up)    at (1+1.5*2, 0.9555) {$U_3$};
\node[normalsuperteam]      (down)    at (1+1.5*2, -0.9555) {$U_{m^2-1}$};
\draw[very thick,<-] (up.south) to (down.north);

\node[normalsuperteam]      (up)    at (1+1.5*3, 0.9555) {$U_{4}$};
\node[normalsuperteam]      (down)    at (1+1.5*3, -0.9555) {$U_{m^2-2}$};
\draw[very thick,->] (up.south) to (down.north);

\node      (up)    at (1+1.5*4, 0) {$\dots$};

\node[normalsuperteam]      (up)    at (1+1.5*5, 0.9555) {$U_{\frac{m^2}{2}}$};
\node[normalsuperteam]      (down)    at (1+1.5*5, -0.9555) {$U_{\frac{m^2}{2}+1}$};
\draw[very thick,->] (up.south) to (down.north);
\end{tikzpicture}
}
\caption{The super-game schedule in the first block.}
\label{fig02}
\end{figure}

In the second block, we change the positions of single-cycle nodes in the cycle $U_1U_2\dots U_{m^2-1}$ by moving one position in the clockwise direction, and we also change the direction of each edge except for the most left edge (i.e., the edge incident to the double-cycle node). The main difference with the first block is that the super-game for the most left edge is called the \emph{left} super-game, where we put a letter `L' to denote it. The other super-games are still normal super-games. So, there is $1$ left super-game and $m^2/2-1$ normal super-games in the second block. See Figure~\ref{fig03} for an illustration.

\begin{figure}[ht]
\centering
\resizebox{0.6\textwidth}{!}{\begin{tikzpicture}
[
leftsuperteam/.style={circle, draw=black!100, very thick, minimum size=12.8mm, inner sep=0pt},
normalsuperteam/.style={circle, draw=black!100, very thick, minimum size=12.8mm, inner sep=0pt},
rightsuperteam/.style={draw=black!100, very thick, minimum size=12.8mm, inner sep=0pt},
cyc/.style={circle, draw=black!100, very thick, minimum size=11mm, inner sep=0pt},
]
\node[leftsuperteam]        (u9)    at (-1.2,0) {$U_{m^2}$};
\node[cyc]        (u9+)   at (-1.2,0) {};
\node[normalsuperteam]      (u8)   at (1,0) {$U_{m^2-1}$};
\draw[very thick,->,above] (u9.east) to node {L} (u8.west);

\node[normalsuperteam]      (up)    at (1+1.5*1, 0.9555) {$U_1$};
\node[normalsuperteam]      (down)    at (1+1.5*1, -0.9555) {$U_{m^2-2}$};
\draw[very thick,->] (down.north) to (up.south);

\node[normalsuperteam]      (up)    at (1+1.5*2, 0.9555) {$U_2$};
\node[normalsuperteam]      (down)    at (1+1.5*2, -0.9555) {$U_{m^2-3}$};
\draw[very thick,->] (up.south) to (down.north);

\node[normalsuperteam]      (up)    at (1+1.5*3, 0.9555) {$U_3$};
\node[normalsuperteam]      (down)    at (1+1.5*3, -0.9555) {$U_{m^2-4}$};
\draw[very thick,->] (down.north) to (up.south);

\node      (up)    at (1+1.5*4, 0) {$\dots$};

\node[normalsuperteam]      (up)    at (1+1.5*5, 0.9555) {$U_{\frac{m^2}{2}-1}$};
\node[normalsuperteam]      (down)    at (1+1.5*5, -0.9555) {$U_{\frac{m^2}{2}}$};
\draw[very thick,<-] (up.south) to (down.north);
\end{tikzpicture}}
\caption{The super-game schedule in the second block.}
\label{fig03}
\end{figure}

In the third block, we change the positions of single-cycle nodes in the cycle $U_1U_2\dots U_{m^2-1}$ by moving one position in the clockwise direction, and change the direction of every edge (including the most left edge). So, there is still $1$ left super-game and $m^2/2-1$ normal super-games in the third block. See Figure~\ref{fig04} for an illustration. Moreover, the schedules for the other blocks (except for the last) are derived analogously. 

\begin{figure}[ht]
\centering
\resizebox{0.6\textwidth}{!}{\begin{tikzpicture}
[
leftsuperteam/.style={circle, draw=black!100, very thick, minimum size=12.8mm, inner sep=0pt},
normalsuperteam/.style={circle, draw=black!100, very thick, minimum size=12.8mm, inner sep=0pt},
rightsuperteam/.style={draw=black!100, very thick, minimum size=12.8mm, inner sep=0pt},
cyc/.style={circle, draw=black!100, very thick, minimum size=11mm, inner sep=0pt},
]
\node[leftsuperteam]        (u9)    at (-1.2,0) {$U_{m^2}$};
\node[cyc]        (u9+)   at (-1.2,0) {};
\node[normalsuperteam]      (u8)   at (1,0) {$U_{m^2-2}$};
\draw[very thick,<-,above] (u9.east) to node {L} (u8.west);

\node[normalsuperteam]      (up)    at (1+1.5*1, 0.9555) {$U_{m^2-1}$};
\node[normalsuperteam]      (down)    at (1+1.5*1, -0.9555) {$U_{m^2-3}$};
\draw[very thick,<-] (down.north) to (up.south);

\node[normalsuperteam]      (up)    at (1+1.5*2, 0.9555) {$U_1$};
\node[normalsuperteam]      (down)    at (1+1.5*2, -0.9555) {$U_{m^2-4}$};
\draw[very thick,<-] (up.south) to (down.north);

\node[normalsuperteam]      (up)    at (1+1.5*3, 0.9555) {$U_2$};
\node[normalsuperteam]      (down)    at (1+1.5*3, -0.9555) {$U_{m^2-5}$};
\draw[very thick,<-] (down.north) to (up.south);

\node      (up)    at (1+1.5*4, 0) {$\dots$};

\node[normalsuperteam]      (up)    at (1+1.5*5, 0.9555) {$U_{\frac{m^2}{2}-2}$};
\node[normalsuperteam]      (down)    at (1+1.5*5, -0.9555) {$U_{\frac{m^2}{2}-1}$};
\draw[very thick,->] (up.south) to (down.north);
\end{tikzpicture}}
\caption{The super-game schedule in the third block.}
\label{fig04}
\end{figure}

Before we introduce the super-games in the last block, we first explain how to extend the normal/left super-games in the first $m-1$ blocks to normal games.

\textbf{Normal super-games.}
Consider a normal super-game from super-teams $U_i$ to $U_j$. Recall that each super-team contains exactly $d$ $k$-paths. Assume that $U_i=\{\{x_0,\dots,x_{k-1}\},\dots,\{x_{kd-k},$ $\dots,x_{kd-1}\}\}$ and $U_j=\{\{y_0,\dots,y_{k-1}\},\dots,\{y_{kd-k},\dots,y_{kd-1}\}\}$, where $\{x_{ki'-k},\dots,x_{ki'-1}\}$ (resp., $\{y_{ki'-k},\dots,y_{ki'-1}\}$) denotes the $i'$th $k$-path in $U_i$ (resp., $U_j$). The normal super-game will be extended into normal games on $2kd$ days in the following way:
\begin{itemize}
    \item Team $x_{ki+i'}$ plays an away game with $y_{kj+j'}$ on $(2k(i+j)+i'+j')\bmod 2kd$ day,
    \item Team $x_{ki+i'}$ plays a home game with $y_{kj+j'}$ on $(2k(i+j)+i'+j'+k)\bmod 2kd$  day,
\end{itemize}
where $0\leq i,j\leq d-1$ and $0\leq i',j'\leq k-1$. An illustration of the normal games after extending one normal super-game for $k=3$ and $d=2$ is shown in Table~\ref{normal-example}.

\begin{table}[ht]
\centering
\begin{tabular}{m{0.2cm}<{\centering}|*{12}{m{0.2cm}<{\centering}}}
  & $0$ & $1$ & $2$ & $3$ & $4$ & $5$ & $6$ & $7$ & $8$ & $9$ & $10$ & $11$\\
\hline
  $x_0$ & $y_0$ & $y_1$ & $y_2$ & $\pmb{y_0}$ & $\pmb{y_1}$ & $\pmb{y_2}$ & $y_3$ & $y_4$ & $y_5$ & $\pmb{y_3}$ & $\pmb{y_4}$ & $\pmb{y_5}$\\
  $x_1$ & $\pmb{y_5}$ & $y_0$ & $y_1$ & $y_2$ & $\pmb{y_0}$ & $\pmb{y_1}$ & $\pmb{y_2}$ & $y_3$ & $y_4$ & $y_5$ & $\pmb{y_3}$ & $\pmb{y_4}$\\
  $x_2$ & $\pmb{y_4}$ & $\pmb{y_5}$ & $y_0$ & $y_1$ & $y_2$ & $\pmb{y_0}$ & $\pmb{y_1}$ & $\pmb{y_2}$ & $y_3$ & $y_4$ & $y_5$ & $\pmb{y_3}$\\
  $x_3$ & $y_3$ & $y_4$ & $y_5$ & $\pmb{y_3}$ & $\pmb{y_4}$ & $\pmb{y_5}$ & $y_0$ & $y_1$ & $y_2$ & $\pmb{y_0}$ & $\pmb{y_1}$ & $\pmb{y_2}$\\
  $x_4$ & $\pmb{y_2}$ & $y_3$ & $y_4$ & $y_5$ & $\pmb{y_3}$ & $\pmb{y_4}$ & $\pmb{y_5}$ & $y_0$ & $y_1$ & $y_2$ & $\pmb{y_0}$ & $\pmb{y_1}$\\
  $x_5$ & $\pmb{y_1}$ & $\pmb{y_2}$ & $y_3$ & $y_4$ & $y_5$ & $\pmb{y_3}$ & $\pmb{y_4}$ & $\pmb{y_5}$ & $y_0$ & $y_1$ & $y_2$ & $\pmb{y_0}$\\
\hline
  $y_0$ & $\pmb{x_0}$ & $\pmb{x_1}$ & $\pmb{x_2}$ & $x_0$ & $x_1$ & $x_2$ & $\pmb{x_3}$ & $\pmb{x_4}$ & $\pmb{x_5}$ & $x_3$ & $x_4$ & $x_5$\\
  $y_1$ & $x_5$ & $\pmb{x_0}$ & $\pmb{x_1}$ & $\pmb{x_2}$ & $x_0$ & $x_1$ & $x_2$ & $\pmb{x_3}$ & $\pmb{x_4}$ & $\pmb{x_5}$ & $x_3$ & $x_4$\\
  $y_2$ & $x_4$ & $x_5$ & $\pmb{x_0}$ & $\pmb{x_1}$ & $\pmb{x_2}$ & $x_0$ & $x_1$ & $x_2$ & $\pmb{x_3}$ & $\pmb{x_4}$ & $\pmb{x_5}$ & $x_3$\\
  $y_3$ & $\pmb{x_3}$ & $\pmb{x_4}$ & $\pmb{x_5}$ & $x_3$ & $x_4$ & $x_5$ & $\pmb{x_0}$ & $\pmb{x_1}$ & $\pmb{x_2}$ & $x_0$ & $x_1$ & $x_2$\\
  $y_4$ & $x_2$ & $\pmb{x_3}$ & $\pmb{x_4}$ & $\pmb{x_5}$ & $x_3$ & $x_4$ & $x_5$ & $\pmb{x_0}$ & $\pmb{x_1}$ & $\pmb{x_2}$ & $x_0$ & $x_1$\\
  $y_5$ & $x_1$ & $x_2$ & $\pmb{x_3}$ & $\pmb{x_4}$ & $\pmb{x_5}$ & $x_3$ & $x_4$ & $x_5$ & $\pmb{x_0}$ & $\pmb{x_1}$ & $\pmb{x_2}$ & $x_0$\\
\end{tabular}
\caption{
Extending the normal super-game from $U_i=$ $\{\{x_0,x_1,x_2\}, \{x_3,x_4,x_5\}\}$ to $U_j=\{\{y_0,y_1,y_2\}$, $\{y_3,y_4,y_5\}\}$ into normal games on $2kd=12$ days, where home normal games are marked in bold.}
\label{normal-example}
\end{table}

We can see that every normal team in $U_i$ will play $d$ away trips along the $d$ $k$-paths in $U_j$, and every normal team in $U_j$ will play $d$ or $d+1$ away trips along at least $d-1$ $k$-paths in $U_i$. Moreover, all games are arranged between one team in $U_i$ and one in $U_j$. Note that this construction was firstly used in~\cite{hoshino2013approximation}.

\textbf{Left super-games.}
Consider a left super-game from super-teams $U_i$ to $U_j$. Assume that $U_i=\{\{x_0,\dots,x_{k-1}\},\dots,\{x_{kd-k},$ $\dots,x_{kd-1}\}\}$ and $U_j=\{\{y_0,\dots,y_{k-1}\},\dots,\{y_{kd-k},\dots,y_{kd-1}\}\}$. The extension is easy to present. For all normal teams in $U_i$ and $U_j$, we define matches $s_i=\{x_{i'}\rightarrow y_{(kd+i-i')\bmod kd}\}_{i'=0}^{kd-1}$, and also let $\overline{s_i}$ denote the case that the game venues in $s_i$ are reversed. We directly extend the left super-game into normal games on $2kd$ days in the following way:
\[
s_0\overline{s_1}s_2\overline{s_3}\cdots s_{kd-2}\overline{s_{kd-1}} \cdot\overline{s_0\overline{s_1}s_2\overline{s_3}\cdots s_{kd-2}\overline{s_{kd-1}}}.
\]
An illustration of the normal games after extending one normal super-game for $k=3$ and $d=2$ is shown in Table~\ref{left-example}. All games between one team in $U_i$ and one in $U_j$ are arranged.

\begin{table}[ht]
\centering
\begin{tabular}{m{0.2cm}<{\centering}|*{12}{m{0.2cm}<{\centering}}}
& $0$ & $1$ & $2$ & $3$ & $4$ & $5$ & $6$ & $7$ & $8$ & $9$ & $10$ & $11$\\
\hline
  $x_0$ & $y_0$ & $\pmb{y_1}$ & $y_2$ & $\pmb{y_3}$ & $y_4$ & $\pmb{y_5}$ & $\pmb{y_0}$ & $y_1$ & $\pmb{y_2}$ & $y_3$ & $\pmb{y_4}$ & $y_5$\\
  $x_1$ & $y_5$ & $\pmb{y_0}$ & $y_1$ & $\pmb{y_2}$ & $y_3$ & $\pmb{y_4}$ & $\pmb{y_5}$ & $y_0$ & $\pmb{y_1}$ & $y_2$ & $\pmb{y_3}$ & $y_4$\\
  $x_2$ & $y_4$ & $\pmb{y_5}$ & $y_0$ & $\pmb{y_1}$ & $y_2$ & $\pmb{y_3}$ & $\pmb{y_4}$ & $y_5$ & $\pmb{y_0}$ & $y_1$ & $\pmb{y_2}$ & $y_3$\\
  $x_3$ & $y_3$ & $\pmb{y_4}$ & $y_5$ & $\pmb{y_0}$ & $y_1$ & $\pmb{y_2}$ & $\pmb{y_3}$ & $y_4$ & $\pmb{y_5}$ & $y_0$ & $\pmb{y_1}$ & $y_2$\\
  $x_4$ & $y_2$ & $\pmb{y_3}$ & $y_4$ & $\pmb{y_5}$ & $y_0$ & $\pmb{y_1}$ & $\pmb{y_2}$ & $y_3$ & $\pmb{y_4}$ & $y_5$ & $\pmb{y_0}$ & $y_1$\\
  $x_5$ & $y_1$ & $\pmb{y_2}$ & $y_3$ & $\pmb{y_4}$ & $y_5$ & $\pmb{y_0}$ & $\pmb{y_1}$ & $y_2$ & $\pmb{y_3}$ & $y_4$ & $\pmb{y_5}$ & $y_0$\\
\hline
  $y_0$ & $\pmb{x_0}$ & $x_1$ & $\pmb{x_2}$ & $x_3$ & $\pmb{x_4}$ & $x_5$ & $x_0$ & $\pmb{x_1}$ & $x_2$ & $\pmb{x_3}$ & $x_4$ & $\pmb{x_5}$\\
  $y_1$ & $\pmb{x_5}$ & $x_0$ & $\pmb{x_1}$ & $x_2$ & $\pmb{x_3}$ & $x_4$ & $x_5$ & $\pmb{x_0}$ & $x_1$ & $\pmb{x_2}$ & $x_3$ & $\pmb{x_4}$\\
  $y_2$ & $\pmb{x_4}$ & $x_5$ & $\pmb{x_0}$ & $x_1$ & $\pmb{x_2}$ & $x_3$ & $x_4$ & $\pmb{x_5}$ & $x_0$ & $\pmb{x_1}$ & $x_2$ & $\pmb{x_3}$\\
  $y_3$ & $\pmb{x_3}$ & $x_4$ & $\pmb{x_5}$ & $x_0$ & $\pmb{x_1}$ & $x_2$ & $x_3$ & $\pmb{x_4}$ & $x_5$ & $\pmb{x_0}$ & $x_1$ & $\pmb{x_2}$\\
  $y_4$ & $\pmb{x_2}$ & $x_3$ & $\pmb{x_4}$ & $x_5$ & $\pmb{x_0}$ & $x_1$ & $x_2$ & $\pmb{x_3}$ & $x_4$ & $\pmb{x_5}$ & $x_0$ & $\pmb{x_1}$\\
  $y_5$ & $\pmb{x_1}$ & $x_2$ & $\pmb{x_3}$ & $x_4$ & $\pmb{x_5}$ & $x_0$ & $x_1$ & $\pmb{x_2}$ & $x_3$ & $\pmb{x_4}$ & $x_5$ & $\pmb{x_0}$\\
\end{tabular}
\caption{
Extending the left super-game from $U_i=$ $\{\{x_0,x_1,x_2\}, \{x_3,x_4,x_5\}\}$ to $U_j=\{\{y_0,y_1,y_2\}$, $\{y_3,y_4,y_5\}\}$ into normal games on $2kd=12$ days, where home normal games are marked in bold.}
\label{left-example}
\end{table}

\begin{lemma}\label{feas1}
There are no more than $k$ consecutive home/away games in the first $m^2-1$ blocks.
\end{lemma}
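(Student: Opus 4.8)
The plan is to bound the number of consecutive home/away games first \emph{inside} each of the first $m^2-1$ time slots and then \emph{across} the boundary between every two consecutive such time slots, and finally to combine the two bounds. Recall that in each of these time slots a team plays exactly the $2kd=2m$ normal games obtained by extending the unique super-game that contains its super-team, so inside one time slot a team's home/away pattern is completely determined by that extension.

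First I would analyze a single super-game. For a \emph{normal} super-game from $U_i$ to $U_j$, the defining formulas show that a team in position $i'$ of its $k$-path on the $U_i$-side plays away exactly on the days congruent to $i',i'+1,\dots,i'+k-1$ modulo $2k$, so its pattern over the $2kd$ days is $d$ copies of the block $H^{i'}A^{k}H^{k-i'}$; this has at most $k$ consecutive home games and at most $k$ consecutive away games, the home run straddling two consecutive blocks having length exactly $(k-i')+i'=k$ and each away run being exactly $A^k$. A team on the $U_j$-side behaves the same way with home and away interchanged. For a \emph{left} super-game, the schedule $s_0\overline{s_1}s_2\cdots s_{kd-2}\overline{s_{kd-1}}$ followed by its venue-reversal gives every $U_i$-side team the pattern $(AH)^{kd/2}(HA)^{kd/2}$ and every $U_j$-side team the pattern $(HA)^{kd/2}(AH)^{kd/2}$, each of which has at most two equal symbols in a row, which is $\le k$ since $k\ge 3$. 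While doing this I would also record the first $k$ and the last $k$ symbols of each within-slot pattern: on the $U_i$-side of a normal super-game the pattern begins with $H^{i'}A^{k-i'}$ and ends with $A^{i'}H^{k-i'}$ (and dually on the $U_j$-side), whereas in a left super-game it begins and ends with a single symbol.

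It then remains to examine the boundary between time slots $t$ and $t+1$: a maximal run of home (or away) games meeting this boundary has length at most (trailing run in slot $t$)~$+$~(leading run in slot $t+1$), and both quantities are read off the recorded head/tail shapes, so one goes through the possible combinations of super-game types and of the role ($U_i$-side or $U_j$-side) that the team's super-team plays in the two slots. The key input is the direction-assignment rule of the super-game schedule, which can be checked to guarantee that a super-team appearing in normal super-games in both slots keeps its role and that moving into or out of a left super-game is done with compatible roles; granting this, the only runs that merge at a boundary either come from two normal super-games on the same side — in which case the trailing run of length $k-i'$ and the leading run of length $i'$ combine into a run of length exactly $(k-i')+i'=k$ — or involve a length-$1$ run from a left super-game, which never pushes the total above $k$. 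Hence no boundary produces a run of length $2k$ or $k+1$, every boundary run stays $\le k$, and combining with the within-slot bound over all of the first $m^2-1$ time slots gives the lemma.

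The boundary analysis is the step I expect to be the main obstacle. Inside a single super-game everything is clean and periodic; the subtlety is concentrated at the junctions, and there the position-$0$ team of each $k$-path is delicate, since on the $U_i$-side of a normal super-game its pattern already ends with a full length-$k$ home run and begins with a full length-$k$ away run. A wrong role switch between two consecutive normal super-games would therefore create a forbidden length-$2k$ run, and an incompatible junction with a left super-game would create a length-$(k+1)$ run. Verifying that the direction rule of the super-game schedule — essentially a parity/consistency property of the rotation — rules out exactly these configurations is the part that needs care.
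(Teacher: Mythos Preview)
Your plan is correct and matches the paper's argument: both verify the within-slot constraint directly from the explicit normal/left extensions and then handle the boundary between consecutive time slots by a short case analysis driven by the rotation scheme. The paper carries out exactly the step you flag as the obstacle by observing that the rotation admits only the six transition patterns $NL$, $NN$, $\overline{L}N$ and their venue-reversals (in particular, $N\overline{N}$, $\overline{N}N$, $N\overline{L}$, $\overline{N}L$ never occur), after which your head/tail bookkeeping --- trailing $H^{k-i'}$ meeting leading $H^{i'}$ in the $NN$ case, and a length-$1$ run on the left-super-game side otherwise --- finishes the proof just as in the paper.
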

\begin{proof}
We use `A' and `H' to denote an away game and a home game, respectively.

By the design of left and normal super-games, it is clear that games in each block satisfy the constraint. Assume that there are more than $k$ consecutive home/away games for team $t_i$ at the joint of blocks $Q$ and $Q+1$. Recall that the states of each team on days $2i-1$ and $2i$ in each left super-game are AH or HA. So, team $t_i$ cannot be from super-team $U_{m^2}$ since $U_{m^2}$ always plays left super-games. Assume that team $t_i$ is from super-team $U_{i'}$ ($i'\neq m$). Moreover, there should be a normal super-game for $U_{i'}$ in block $Q$ or $Q+1$. For the sake of presentation, we use $L$ and $N$ (resp., $\overline{L}$ and $\overline{N}$) to indicate an away (resp, a home) left and normal super-game of super-team $U_{i'}$. According to the rotation scheme of the construction, there are six cases for the super-games of super-team $U_{i'}$ in blocks $Q$ and $Q+1$: $NL$, $NN$, $\overline{L}N$, and $\overline{NL}$, $\overline{NN}$, $\overline{\overline{L}N}$. Note that the first three cases are symmetrical with the last three. It is sufficient to consider the first three cases. In an away normal super-game, team $t_i$ plays $k'$ consecutive homes games on the first $k'$ days and $(k-k')$ consecutive homes games on the last $k-k'$ days where $0\leq k'\leq k-1$ depending on the position of $t_i$ in the $k$-path. For the joint of $NL$, since the states of $t_i$ are AH on the first two days of $L$, there are no consecutive home/away games. For the joint of $NN$, there are exactly $k$ consecutive home games. For the joint of $\overline{L}N$, since the states of $t_i$ are AH on the last two days of $\overline{L}$, there are at most $k'+1\leq k$ consecutive home games. Therefore, two consecutive blocks cannot create more than $k$ consecutive home/away games.
\end{proof}

At last, we are ready to design games in the last block. Recall that in a normal/left super-game between super-teams $U_i$ and $U_j$, all games between one team in $U_i$ and one in $U_j$ are arranged. Moreover, it is well-known that the rotation scheme of the construction can make sure any pair of $m^2$ super-teams meets exactly once in the first $m^2-1$ blocks. Hence, the unarranged games are games within each super-team. The unarranged games within each super-team can be seen as a double round-robin on $m$ teams (recall that $m$ is even). To arrange them, we will simply call a TTP-2 algorithm. Clearly, there would be no more than $2$ consecutive home/away games in the last block. We also need to make sure that there are no more than $k$ consecutive home/away games between the last two blocks. Thus, we will use a special TTP-2 algorithm where every team plays AH or HA on the first two days of its schedule.

\begin{lemma}
There is a special TTP-2 algorithm where every team plays AH or HA on the first two days of its schedule.
\end{lemma}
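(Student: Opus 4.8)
The plan is to give an explicit polynomial-time construction. Fix a $1$-factorization $F_0,\dots,F_{m-2}$ of the complete graph $K_m$ on the $m$ teams (for instance the canonical circle-method factorization), and lay out the $2(m-1)$ days as $m-1$ consecutive \emph{double-rounds}: double-round $s$ ($0\le s\le m-2$) occupies days $2s{+}1$ and $2s{+}2$, playing the matching $F_{(s+1)\bmod(m-1)}$ on day $2s{+}1$ and the matching $F_{s\bmod(m-1)}$ on day $2s{+}2$. A direct check shows that every pair of teams meets exactly twice, always on non-consecutive days, and that consecutive days always carry distinct matchings; hence the no-repeat constraint holds for every home/away assignment, and orienting the two meetings of each pair oppositely turns this into a genuine double round-robin. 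The only remaining freedom is who is home in each game, and the key observation for this lemma is that we need only feasibility --- the bounded-by-$2$ and no-repeat constraints --- together with the first-two-days property, not distance efficiency; it therefore suffices to complete the home/away assignment of the later days in any globally consistent way (each pair home once) that keeps all home/away runs at length at most $2$.

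For the first double-round, days $1$ and $2$ use the edge-disjoint matchings $F_1$ and $F_0$, and their union $F_0\cup F_1$ is a vertex-disjoint union of even cycles of length at least $4$. Orient each such cycle consistently as a directed cycle and declare the tail of every edge to be the home team of the corresponding game. Then each team is the home team of exactly one of its two games on days $1$ and $2$, i.e.\ it plays HA or AH, and no-repeat already holds there because $F_0$ and $F_1$ are disjoint. Moreover the directed-cycle orientations are the \emph{only} orientations of $F_0$ and $F_1$ with this alternating property (on each cycle every vertex must have in-degree and out-degree $1$), which is precisely what makes the rest of the argument delicate. What remains is to orient the games of double-rounds $1,\dots,m-2$ so that no team ever has three consecutive home or three consecutive away games, both inside those double-rounds and across the seam between days $2$ and $3$ (and, where $F_0$ recurs, across the seam between days $2m-3$ and $2m-2$).

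The seam is where the main obstacle lies. Since each factor is used in exactly two double-rounds, fixing the home team on every $F_0$- and $F_1$-edge in the first double-round \emph{forces} the orientation of those same edges in double-round $1$ (which also uses $F_1$) and in double-round $m-2$ (which also uses $F_0$); a careless choice of the remaining orientations can genuinely produce a run of three at such a seam --- this already happens for the obvious orientation when $m=4$. I would deal with this not by orienting the later double-rounds in isolation but by propagating a consistent home/away pattern outward from the first double-round, exploiting that each team plays exactly one home and one away game in double-round $0$ (so one free bit per team survives), and then running a short case analysis on whether a given team's day-$2$, resp.\ day-$(2m-3)$, game is home or away --- in the same spirit as the proof of Lemma~\ref{feas1} --- to exclude three-in-a-row at the two seams. (Alternatively, one can take any known TTP-$2$ schedule, surgically re-orient $F_0\cup F_1$ on days $1,2$ into the directed-cycle orientation, flip the corresponding return games on later days to keep the double round-robin valid, and then repair the bounded-by-$2$ property on the handful of affected days by local cycle re-orientations; the obstacle is of the same flavor.) Since the whole construction amounts to a fixed formula plus orienting $O(m)$ cycles, it runs in polynomial time, which yields the claimed special TTP-$2$ algorithm.
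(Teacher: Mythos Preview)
Your first‐two‐days argument is correct: the union $F_0\cup F_1$ of two disjoint perfect matchings is a disjoint union of even cycles, and a consistent cyclic orientation makes every team home exactly once on days $1,2$. That part is fine and is indeed the easy half of the lemma.

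The gap is that you never actually finish the construction. You correctly observe that the orientations of $F_0$ and $F_1$ on days $1,2$ force their orientations on their second appearances (your days $4$ and $2m{-}3$), and you even note that a naive completion already fails for $m=4$. But from that point on the argument degenerates into intentions: ``I would deal with this by propagating a consistent pattern\ldots and then running a short case analysis'', or alternatively ``take any known TTP-$2$ schedule\ldots and repair the bounded-by-$2$ property by local cycle re-orientations''. Neither the propagation rule nor the repair procedure is specified, and no invariant is stated that would guarantee the absence of three-in-a-row runs at \emph{every} seam (not just the two involving $F_0$ and $F_1$: in your factor sequence $F_1,F_0,F_2,F_1,F_3,F_2,\ldots$ every internal seam links a factor whose orientation is already forced by an earlier day to a freshly chosen one, so the constraints chain all the way through). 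Since the entire content of the lemma is precisely that such a completion exists, this is not a detail that can be waved at.

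By contrast, the paper sidesteps the orientation‐propagation problem altogether: it writes down one fixed rotation schedule $\Gamma_1,\ldots,\Gamma_{m-1}$ with explicit edge directions (the circle method with alternating arrow directions, flipping only the leftmost edge between consecutive rounds), and then concatenates
\[
\Gamma_1\Gamma_2\cdots\Gamma_{m-1}\cdot\overline{\Gamma_{m-2}\,\Gamma_{m-1}\,\Gamma_1\cdots\Gamma_{m-3}}.
\]
Because everything is explicit, the bounded-by-$2$ property and the AH/HA pattern on days $1,2$ can be read off directly (and the paper records that exactly $m/2$ teams play AH and $m/2$ play HA, a fact it actually uses later). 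If you want to salvage your approach, you need to either exhibit a concrete orientation rule for all double-rounds and verify the run-length bound at every seam, or replace the vague ``repair'' step with a precise local move together with a proof that it terminates without creating new violations.
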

\begin{proof}
Assume that there are $m$ teams, denoted by $\{1,2,..,m\}$. 
%We show a construction satisfying the above property. 
%Moreover, we will show that it is easy to generalize for a larger number of teams.

\begin{figure}[ht]
\tiny
\begin{minipage}[t]{0.33\columnwidth}
\centering
\begin{tikzpicture}
[
leftsuperteam/.style={circle, draw=black!100, very thick, minimum size=7.5mm, inner sep=0pt},
normalsuperteam/.style={circle, draw=black!100, very thick, minimum size=7.5mm, inner sep=0pt},
rightsuperteam/.style={draw=black!100, very thick, minimum size=7.5mm, inner sep=0pt},
cyc/.style={circle, draw=black!100, very thick, minimum size=9.5mm, inner sep=0pt},
]
\node[leftsuperteam]        (u9)    at (-0.1,0) {$m$};
\node[cyc]        (u9+)   at (-0.1,0) {};
\node[normalsuperteam]      (u8)   at (1,0) {$1$};
\draw[very thick,->] (u9+.east) to (u8.west);

\node[normalsuperteam]      (up)    at (2, 0.55) {$2$};
\node[normalsuperteam]      (down)    at (2, -0.55) {$m-1$};
\draw[very thick,<-] (down.north) to (up.south);

\node[normalsuperteam]      (up)    at (3, 0.55) {$3$};
\node[normalsuperteam]      (down)    at (3, -0.55) {$m-2$};
\draw[very thick,<-] (up.south) to (down.north);

\node[normalsuperteam]      (up)    at (4, 0.55) {$4$};
\node[normalsuperteam]      (down)    at (4, -0.55) {$m-3$};
\draw[very thick,->] (up.south) to (down.north);

\node      (up)    at (4.5, 0) {$\dots$};
\end{tikzpicture}
{\footnotesize\centerline{(a) The first day}}
\end{minipage}
\begin{minipage}[t]{0.33\columnwidth}
\centering
\begin{tikzpicture}
[
leftsuperteam/.style={circle, draw=black!100, very thick, minimum size=7.5mm, inner sep=0pt},
normalsuperteam/.style={circle, draw=black!100, very thick, minimum size=7.5mm, inner sep=0pt},
rightsuperteam/.style={draw=black!100, very thick, minimum size=7.5mm, inner sep=0pt},
cyc/.style={circle, draw=black!100, very thick, minimum size=9.5mm, inner sep=0pt},
]
\node[leftsuperteam]        (u9)    at (-0.1,0) {$m$};
\node[cyc]        (u9+)   at (-0.1,0) {};
\node[normalsuperteam]      (u8)   at (1,0) {$m-1$};
\draw[very thick,<-] (u9+.east) to (u8.west);

\node[normalsuperteam]      (up)    at (2, 0.55) {$1$};
\node[normalsuperteam]      (down)    at (2, -0.55) {$m-2$};
\draw[very thick,<-] (down.north) to (up.south);

\node[normalsuperteam]      (up)    at (3, 0.55) {$2$};
\node[normalsuperteam]      (down)    at (3, -0.55) {$m-3$};
\draw[very thick,<-] (up.south) to (down.north);

\node[normalsuperteam]      (up)    at (4, 0.55) {$3$};
\node[normalsuperteam]      (down)    at (4, -0.55) {$m-4$};
\draw[very thick,->] (up.south) to (down.north);

\node      (up)    at (4.5, 0) {$\dots$};
\end{tikzpicture}
{\footnotesize\centerline{(b) The second day}}
\end{minipage}
\begin{minipage}[t]{0.33\columnwidth}
\centering
\begin{tikzpicture}
[
leftsuperteam/.style={circle, draw=black!100, very thick, minimum size=7.5mm, inner sep=0pt},
normalsuperteam/.style={circle, draw=black!100, very thick, minimum size=7.5mm, inner sep=0pt},
rightsuperteam/.style={draw=black!100, very thick, minimum size=7.5mm, inner sep=0pt},
cyc/.style={circle, draw=black!100, very thick, minimum size=9.5mm, inner sep=0pt},
]
\node[leftsuperteam]        (u9)    at (-0.1,0) {$m$};
\node[cyc]        (u9+)   at (-0.1,0) {};
\node[normalsuperteam]      (u8)   at (1,0) {$m-2$};
\draw[very thick,->] (u9+.east) to (u8.west);

\node[normalsuperteam]      (up)    at (2, 0.55) {$m-1$};
\node[normalsuperteam]      (down)    at (2, -0.55) {$m-3$};
\draw[very thick,<-] (down.north) to (up.south);

\node[normalsuperteam]      (up)    at (3, 0.55) {$1$};
\node[normalsuperteam]      (down)    at (3, -0.55) {$m-4$};
\draw[very thick,<-] (up.south) to (down.north);

\node[normalsuperteam]      (up)    at (4, 0.55) {$2$};
\node[normalsuperteam]      (down)    at (4, -0.55) {$m-5$};
\draw[very thick,->] (up.south) to (down.north);

\node      (up)    at (4.5, 0) {$\dots$};
\end{tikzpicture}
{\footnotesize\centerline{(c) The third day}}
\end{minipage}
\caption{An illustration of the special TTP-2 algorithm, where we only show the first three days for $m$ teams.}
\label{fig05}
\end{figure}
The framework of the special TTP-2 algorithm is similar to the construction of super-teams. The last team is denoted by a double-cycle node, and the other teams are single-cycle nodes. On the first day (see (a) in Figure~\ref{fig05}), except for the most left edge, the directions of edges alternatively change from left to right. On the next day, we change the positions of white nodes in the cycle $123...m-1$ by moving one position in the clockwise direction, but we only change the direction of the most left edge. The third day is obtained by the same rule. Based on this, we can obtain the games on the first $m-1$ days. Let $\Gamma_i$ denote the games on $i$th day, and $\overline{\Gamma_i}$ denote the games on $i$th day but with venues reversed. The complete schedule is presented by
\[
\Gamma_1\cdot \Gamma_2\cdots \Gamma_{m-2}\cdot \Gamma_{m-1}\cdot \overline{\Gamma_{m-2}\cdot \Gamma_{m-1}\cdot \Gamma_1\cdot \Gamma_2\cdots \Gamma_{m-3}}.
\]

% See Table~\ref{ttp2-example} for an illustration of the complete double round-robin.
% \begin{table}[ht]
% \centering
% \begin{tabular}{m{0.2cm}<{\centering}|*{10}{m{0.2cm}<{\centering}}}
% & $0$ & $1$ & $2$ & $3$ & $4$ & $5$ & $6$ & $7$ & $8$ & $9$\\
% \hline
%   $1$ & $\pmb{6}$ & $4$ & $\pmb{2}$ & $5$ & $\pmb{3}$ & $\pmb{5}$ & $3$ & $6$ & $\pmb{4}$ & $2$\\
  
%   $2$ & $5$ & $\pmb{3}$ & $1$ & $\pmb{4}$ & $\pmb{6}$ & $4$ & $6$ & $\pmb{5}$ & $3$ & $\pmb{1}$\\
  
%   $3$ & $\pmb{4}$ & $2$ & $\pmb{5}$ & $6$ & $1$ & $\pmb{6}$ & $\pmb{1}$ & $4$ & $\pmb{2}$ & $5$\\
  
%   $4$ & $3$ & $\pmb{1}$ & $\pmb{6}$ & $2$ & $\pmb{5}$ & $\pmb{2}$ & $5$ & $\pmb{3}$ & $1$ & $6$\\
  
%   $5$ & $\pmb{2}$ & $6$ & $3$ & $\pmb{1}$ & $4$ & $1$ & $\pmb{4}$ & $2$ & $\pmb{6}$ & $\pmb{3}$\\
  
%   $6$ & $1$ & $\pmb{5}$ & $4$ & $\pmb{3}$ & $2$ & $3$ & $\pmb{2}$ & $\pmb{1}$ & $5$ & $\pmb{4}$\\
% \end{tabular}
% \caption{An illustration of the special TTP-2 algorithm for $m=6$ teams on $2m-2=10$ days, where home games are marked in bold}
% \label{ttp2-example}
% \end{table}

The construction can be seen as a simple variant of TTP-2 construction in~\cite{thielen2012approximation}. It is easy to verify that the obtained schedule is feasible for TTP-2~\cite{thielen2012approximation}. Moreover, every team plays AH or HA on the first two days (see Figure~\ref{fig05}). Specifically, half teams play AH and half teams play HA on the first two days.
\end{proof}

Then, we use the special TTP-2 algorithm to design the unarranged games for each super-team. Consider a super-team $U_i$. If $U_i$ plays a left super-game, we know that all teams play AH or HA on the last two days in the $(m^2-1)$th block. By calling the special TTP-2 algorithm, every team plays AH or HA on the first two days in the last block. So, these two blocks cannot create more than $k$ consecutive home/away games. If $U_i$ plays an away (resp., home) normal super-game, there are $m/k$ teams playing $k$ consecutive home (resp., away) games and $m-$ $m/k$ teams play $k'$ consecutive home (resp., away) games on the last $k'$ days in the $(m^2-1)$th block. To avoid creating more than $k$ consecutive home/away games, we carefully call the special TTP-2 algorithm such that the $m/k<m/2$ teams play AH (resp., HA) (and each of the other $m-m/k$ teams will play AH or HA) on the first two days in the last block. This is feasible since there are exactly a half number ($m/2$) of teams playing AH and a half number of teams playing HA on the first two days of the special TTP-2 double round-robin. So, there are no more than $k$ consecutive home/away games between the last two blocks. Then, by Lemma~\ref{feas1}, we have the following result.

\begin{lemma}\label{feas2}
There are no more than $k$ consecutive home/away games in the construction.
\end{lemma}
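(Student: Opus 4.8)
The plan is to observe that the full double round-robin schedule (on $2(m^3-1)$ days) is a concatenation of $m^2$ time slots, so any maximal block of consecutive home games of a fixed team $t$ --- the away case being symmetric --- either (a) lies entirely inside a single time slot, or (b) crosses exactly one slot boundary. It cannot cross two boundaries: that would force $t$ to play an all-home slot, whereas in every super-game extension (Tables~\ref{normal-example} and~\ref{left-example}) and in the special TTP-2 schedule (Table~\ref{ttp2-example}) each team plays at least one away game in each slot. Lemma~\ref{feas1} already disposes of all blocks of type (a) inside the first $m^2-1$ slots and of all blocks of type (b) at the internal boundaries among those slots, so it remains only to check blocks inside the last time slot and blocks crossing the single boundary between slot $m^2-1$ and slot $m^2$.

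For the last time slot I would just invoke the special TTP-2 construction: each super-team's unarranged games form a feasible TTP-2 double round-robin, so within slot $m^2$ no team has more than $2\le k$ consecutive home or away games. For the boundary between slots $m^2-1$ and $m^2$ I would run the argument sketched in the paragraph preceding the lemma, organized by a case distinction on the super-game that super-team $U_i$ plays in slot $m^2-1$: if it is a left super-game, every team of $U_i$ ends slot $m^2-1$ in state AH or HA on its last two days, so appending the TTP-2 schedule (which starts AH or HA for each team) adds at most one more equal-type game; if it is a normal super-game, the $m/k$ teams that end slot $m^2-1$ on a full run of $k$ equal-type games are assigned --- using the freedom that the special TTP-2 schedule has exactly $m/2$ teams starting AH and exactly $m/2$ starting HA --- to flip type on the first day of slot $m^2$, which is possible because $m/k<m/2$ for $k\ge3$, while the remaining teams of $U_i$ end on a run of length at most $k-1$ and so can absorb one more equal-type game.

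The main obstacle is precisely this last coordination step: one must be sure that the flexibility of the special TTP-2 construction (any prescribed set of at most $m/2$ teams can serve as the AH-starting half) is rich enough to neutralize every tail pattern that slot $m^2-1$ can hand off, and that this can be done independently for all $m^2$ super-teams at once. Granting that, taking a union over all teams and all slot boundaries of the three sub-cases --- Lemma~\ref{feas1}, the within-slot-$m^2$ bound, and the boundary bound --- yields the lemma; the rest is routine inspection of the finitely many local home/away patterns already displayed in the tables.
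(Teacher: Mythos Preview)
Your proposal is correct and follows essentially the same route as the paper: invoke Lemma~\ref{feas1} for the first $m^2-1$ slots, use the TTP-2 feasibility inside the last slot, and handle the single remaining boundary by the left/normal case split together with the inequality $m/k<m/2$. The ``obstacle'' you flag is not a real one: the special TTP-2 schedule is applied to each super-team independently, and since the team labels $1,\dots,m$ inside that schedule can be permuted arbitrarily, any prescribed subset of size at most $m/2$ can be placed in the AH-starting half, exactly as the paper asserts in the paragraph preceding the lemma.
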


\begin{theorem}\label{feas}
Given instance $J$, the above construction generates a feasible solution for TTP-$k$.
\end{theorem}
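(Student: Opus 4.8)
The plan is to verify, one by one, the defining requirements of a feasible TTP-$k$ schedule for the constructed instance $J$ on $n=m^3$ teams: (i) the schedule is a double round-robin, i.e.\ every ordered pair of distinct teams meets exactly once with the correct team at home; (ii) every team plays exactly one game on each of the $2(n-1)$ consecutive days (equivalently, $n/2$ games take place on each day, with the direct-traveling assumption being automatic); (iii) the no-repeat constraint holds; and (iv) the bounded-by-$k$ constraint holds. Requirement (iv) is exactly Lemma~\ref{feas2}, so only (i)--(iii) remain, and I would organise the argument at the level of super-teams and then at the level of individual teams.

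At the super-team level, I would observe that the rotation scheme generating the first $m^2-1$ time slots is the classical circle method applied to the $m^2$ super-teams (with $U_{m^2}$ fixed and $U_1,\dots,U_{m^2-1}$ rotating by one position each time slot). Since $m$ is even, $m^2$ is even, so each time slot induces a perfect matching of the $m^2$ super-teams into $m^2/2$ super-games, and over the $m^2-1$ time slots every unordered pair of super-teams is matched exactly once. Hence after the first $m^2-1$ time slots the only games not yet scheduled are those internal to a single super-team, and these are precisely what the last time slot schedules, by running the special TTP-2 double round-robin independently on the $m$ teams (recall $m$ is even) of each super-team. This also accounts for all days, since $(m^2-1)\cdot 2kd + (2m-2) = (m^2-1)\cdot 2m + 2m - 2 = 2m^3 - 2 = 2(n-1)$.

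At the team level, the remaining task is to check that extending a single super-game between $U_i$ and $U_j$ (normal or left) over its $2kd=2m$ days schedules each of the $m^2$ pairs $(x_a,y_b)$ exactly twice --- once with $x_a$ at home and once away --- with exactly one game per involved team per day and no opponent repeated on consecutive days. For a normal super-game I would fix a team $x_{ki+i'}$: its away days are $\{\,2k(i+j)+i'+j' \bmod 2kd : 0\le j\le d-1,\ 0\le j'\le k-1\,\}$, which are $kd$ distinct days, and its home days are the shift of these by $k$, giving the complementary $kd$ days; thus $x_{ki+i'}$ plays on each of the $2kd$ days and meets every $y_{kj+j'}$ exactly once at home and once away, these two meetings being $k\ge 3>1$ days apart and hence never consecutive. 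For a left super-game the pattern $s_0\,\overline{s_1}\,s_2\,\overline{s_3}\cdots \overline{s_{kd-1}}$ followed by its venue-reversal makes the same statements immediate, and Tables~\ref{normal-example} and~\ref{left-example} illustrate both cases. Since every super-team appears in exactly one super-game per time slot, this gives requirement (ii), and combined with the super-team-level count it gives requirement (i); the special TTP-2 schedule of the last time slot is itself feasible for TTP-2, so it contributes its games correctly and is internally no-repeat.

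Finally, for the no-repeat constraint (iii) I still have to rule out a repeated opponent across the boundary between two consecutive time slots. Here I would use the single round-robin property at the super-team level: every super-team faces a different opponent super-team in consecutive time slots, so a team in $U_{i'}$ meets teams of two disjoint super-teams on the two sides of such a boundary and cannot repeat an opponent; the same reasoning covers the boundary between time slot $m^2-1$ and the last time slot, where a team meets an external super-team on one side and its own super-team on the other. I expect the main (and only) delicate point to be exactly this interaction of consecutive time slots --- but its essence has already been handled for the bounded-by-$k$ constraint in Lemma~\ref{feas1}, and the remaining work is the routine bookkeeping of the $k$-path-packing extension of~\cite{pathpacking2023,hoshino2013approximation}, whose pattern the worked examples already exhibit.
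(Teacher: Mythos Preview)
Your proposal is correct and follows essentially the same approach as the paper: verify completeness of the double round-robin via the super-game structure plus the final TTP-$2$ block, verify no-repeat from the design of the super-game extensions, and invoke Lemma~\ref{feas2} for the bounded-by-$k$ constraint. The paper's own proof is a three-sentence sketch of exactly these points, whereas you supply the supporting bookkeeping (the day count $(m^2-1)\cdot 2m + (2m-2)=2(m^3-1)$, the per-team verification inside normal and left super-games, and the cross-slot no-repeat via distinct opponent super-teams), but there is no substantive difference in method.
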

\begin{proof}
First, the solution is a double round-robin since two games between a pair of teams are arranged entirely in either a super-game or the TTP-2 double round-robin. Then, by the design of super-games and the definition of TTP-2 double round-robin, the games are not arranged on two consecutive days. So, the solution satisfies the no-repeat constraint.
Last, by Lemma \ref{feas2}, the solution also satisfies the at-most-$k$ constraint.
\end{proof}

\subsection{The quality of the schedule}
To analyze the quality, we calculate the weight of the itinerary for each team, i.e., the traveling distance for each team.

\textbf{Case~1.} First, we consider teams in $U_i\in\{U_2,\dots,U_{m^2}\}$. In each of the first $m^2-1$ blocks, $U_i$ plays an away normal super-game with $U_1$. By the design of normal super-games, all teams in $U_i$ play $d$ $k$ consecutive away games along the $d$ $k$-paths in $U_1$. Moreover, all teams are at home before the first game starts and after the last game ends in the away normal super-game. Hence, for each team, the trips of games in the super-game form a sub-itinerary. Recall that teams in $U_i$ are all dummy teams with positions on the depot. So, each team has a sub-itinerary same as the optimal saturated solution in $I'$ and hence has a traveling distance of $\OPT(I')=\OPT(I)$. Note that the other sub-itineraries for visiting teams in $U_{i'}$ ($i'\neq 1$) have a zero weight since all teams in super-teams of $\{U_2,\dots,U_{m^2}\}$ are dummy teams on the same position. So, these $m(m^2-1)$ dummy teams have a total traveling distance of $m(m^2-1)\cdot \OPT(I)$.

\textbf{Case~2.} Then, we consider teams in $U_1$. For the sake of analysis, we assume that all teams in $U_1$ return home after the last game ends in the $(m^2-1)$th block. So, it will split the itinerary of each team into two parts without decreasing the weight by the triangle inequality. Next, we consider a team $x\in U_1$.

\textbf{Case~2.1.} We first consider the first part of its itinerary, i.e., the trips due to the games in the first $m^2-1$ blocks. Note that $U_1$ plays $m^2-1$ home normal super-games in the construction. By the design of normal super-games, it will first play $k'$ consecutive away games ($0\leq k'<k$), then repeat $k$ consecutive home games with $k$ consecutive away games, then play $k$ consecutive home games, and last play $(k-k')$ consecutive away games. Hence, combining all these $m^2-1$ home normal games, we know that $x$ plays exactly $d(m^2-1)$ trips if $k'=0$ and $d(m^2-1)+1$ trips otherwise. Again, the trips are all for dummy teams, and hence each trip has a weight of $2w(o,x)$. Since each team in $U_1$ corresponds to a vertex in $V_{I'}\setminus\{o\}$, teams in $U_1$ have a total traveling distance of at most $2(d(m^2-1)+1)\sum_{x\in V_{I'}}w(o,x)$. Recall that $I'$ is obtained by adding some vertices on the depot of $I$. So, we have $\sum_{x\in V_{I'}}w(o,x)=\sum_{x\in V_{I}}w(o,x)$. Then, we have
\begin{align*}
2(d(m^2-1)+1)\sum_{x\in V_{I'}}w(o,x)&=2(d(m^2-1)+1)\sum_{x\in V_{I}}w(o,x)\\
&=2d(m^2-1)\sum_{x\in V_{I}}w(o,x) + 2\sum_{x\in V_{I}}w(o,x).
\end{align*}

\textbf{Case~2.2.} We then consider the second part of its itinerary, i.e., the trips due to the games in the last block. Recall that in the last block we call a TTP-2 algorithm for teams in $U_1$. In the worst case, team $x$ visits other $m-1$ teams using $m-1$ trips by the triangle inequality, i.e., each trip visits only a single team. So, teams in $U_1$ have a total traveling distance of at most $2\sum_{x\in V_{I'}}\sum_{y\in V_{I'}}w(x,y)$. By the triangle inequality, we have
\begin{align*}
2\sum_{x\in V_{I'}}\sum_{y\in V_{I'}}w(x,y)&\leq 2\sum_{x\in V_{I'}}\sum_{y\in V_{I'}}(w(o,x)+w(o,y))\\
&=2\sum_{x\in V_{I'}}\sum_{y\in V_{I'}}w(o,x)+2\sum_{x\in V_{I'}}\sum_{y\in V_{I'}}w(o,y)\\
&=2(m+1)\sum_{x\in V_{I'}}w(o,x)+2(m+1)\sum_{y\in V_{I'}}w(o,y)\\
&=4(m+1)\sum_{x\in V_{I'}}w(o,x)=4(m+1)\sum_{x\in V_{I}}w(o,x),
\end{align*}
where the second equality follows from that there is 1 depot vertex and $m$ non-depot vertices in $I'$, i.e., $\sum_{x\in V_{I'}}1=m+1$, and the last follows from that $I'$ is obtained by adding some vertices on the depot of $I$. 

Therefore, the schedule has a weight of at most $m(m^2-1)\cdot \OPT(I)+2d(m^2-1)\sum_{x\in V_{I}}w(o,x)+(4m+6)\sum_{x\in V_{I}}w(o,x)$.

\begin{lemma}\label{l2}
There exists a constant $\alpha>0$ such that $\OPT(J)\leq \alpha\cdot m(m^2-1)\cdot \OPT(I)$.
\end{lemma}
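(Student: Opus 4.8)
The plan is to work directly from the upper bound on the weight of the constructed TTP-$k$ schedule established just above, which already gives
\[
\OPT(J)\;\le\; m(m^2-1)\cdot\OPT(I)\;+\;2d(m^2-1)\sum_{x\in V_I}w(o,x)\;+\;(4m+6)\sum_{x\in V_I}w(o,x),
\]
and to argue that the two ``error'' terms are each $O\big(m(m^2-1)\cdot\OPT(I)\big)$. Recalling $d=m/k$, the middle term equals exactly $\tfrac{2}{k}\,m(m^2-1)\sum_{x\in V_I}w(o,x)$, so after dividing through by $m(m^2-1)$ it suffices to bound
\[
\OPT(I)\;+\;\frac{2}{k}\sum_{x\in V_I}w(o,x)\;+\;\frac{4m+6}{m(m^2-1)}\sum_{x\in V_I}w(o,x)
\]
by a constant times $\OPT(I)$. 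The factor $\frac{4m+6}{m(m^2-1)}$ is at most an absolute constant for all even $m\ge 2$ — in fact it is $o(1)$ — so the whole expression is at most $\OPT(I)$ plus an $O(1)$ multiple of $\sum_{x\in V_I}w(o,x)$.

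The key step is then to show $\sum_{x\in V_I}w(o,x)=O(\OPT(I))$. This is where Lemma~\ref{bounded} enters: it tells us $\OPT(I)=\Theta(n)$, while the defining property $w(o,v)=\Theta(1)$ of Restricted $k$-TC together with $|V_I\setminus\{o\}|=n-1$ gives $\sum_{x\in V_I}w(o,x)=\Theta(n)$ as well; hence there is a constant $c>0$ with $\sum_{x\in V_I}w(o,x)\le c\cdot\OPT(I)$. (For completeness one notes $\OPT(I)>0$ always, since every tour must use two positive-weight edges at the depot, and asymptotic behaviour is all that is needed for an L-reduction.) Plugging this in yields $\OPT(J)/\big(m(m^2-1)\big)\le\big(1+(\tfrac{2}{k}+O(1))c\big)\,\OPT(I)$, so the lemma holds with $\alpha$ a constant depending only on $k$.

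I do not anticipate a real obstacle: the argument is essentially bookkeeping on top of the already-computed schedule weight, the facts $\OPT(I)=\Theta(n)$ (Lemma~\ref{bounded}) and $m=\Theta(n)$, and the identity $d/m=1/k$. The only point requiring a little care is to derive $\sum_{x\in V_I}w(o,x)\le c\cdot\OPT(I)$ from the \emph{lower} bound on $\OPT(I)$ (the $\Omega(n)$ part of Lemma~\ref{bounded}), rather than accidentally invoking the trivial upper bound $\OPT(I)\le 2\sum_{x\in V_I}w(o,x)$ from its proof, which points the wrong way.
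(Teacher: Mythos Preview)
Your proposal is correct and follows essentially the same approach as the paper: start from the already-established upper bound on $\OPT(J)$, use $d=m/k$ and $m=\Theta(n)$, and invoke $\sum_{x\in V_I}w(o,x)=\Theta(n)$ together with $\OPT(I)=\Theta(n)$ (Lemma~\ref{bounded}) to conclude that the extra terms are $O\big(m(m^2-1)\cdot\OPT(I)\big)$. The paper's argument is slightly terser but uses exactly the same ingredients.
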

\begin{proof}
Given an instance $I$ of Restricted $k$-TC, our construction shows that 
\begin{align*}
\OPT(J)&\leq m(m^2-1)\cdot \OPT(I)+2d(m^2-1)\sum_{x\in V_{I}}w(o,x)+(4m+6)\sum_{x\in V_{I}}w(o,x).    
\end{align*}
By Definition~\ref{restr} and Lemma~\ref{bounded}, we have 
\begin{equation}\label{m=tn+}
\sum_{x\in V_{I}}w(o,x)=\Theta(n)\quad \mbox{and}\quad \OPT(I)=\Theta(n).  
\end{equation} 
Moreover, since $d=m/k$ and $m=\Theta(n)$ by (\ref{m=tn}), we can get
\begin{align*}
&m(m^2-1)\cdot \OPT(I)+2d(m^2-1)\sum_{x\in V_{I}}w(o,x)+(4m+6)\sum_{x\in V_{I}}w(o,x)\\
&=\Theta(m(m^2-1)\cdot \OPT(I)).    
\end{align*}
So, there exists a constant $\alpha>0$ such that $\OPT(J)\leq \alpha\cdot m(m^2-1)\cdot \OPT(I)$.
\end{proof}

\section{The Third Property}\label{condition3}
In this section, we prove the last property of the L-reduction in Definition~\ref{L-reduction}.
Given a solution $S(J)$ of instance $J$, we first show how to get a solution $S(I)$ of $I$ in polynomial time. 

Denote the itinerary consisting of the away trips by team $x\in V_J$ in $S(J)$ by $T_x$ and the traveling distance by $w(T_x)$. Recall that there are $m(m^2-1)$ dummy teams with their positions on the depot, denoted by $D_J$. 

\begin{lemma}\label{l3.1}
Given a solution $S(J)$ of instance $J$, we can construct a solution $S(I)$ of instance $I$ in polynomial time. 
\end{lemma}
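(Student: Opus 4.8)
The plan is to turn the TTP-$k$ itinerary of a single dummy team into a feasible $k$-TC solution of $I$, do this for every dummy team, and output the cheapest result.

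First I would fix a dummy team $x\in D_J$ and examine its itinerary $T_x$ in $S(J)$. By the direct-traveling constraint, $x$ begins and ends at its home venue, which, because $x$ is a dummy team, is exactly the depot position of $I$. By the bounded-by-$k$ constraint, the away games of $x$ are organized into maximal blocks of at most $k$ consecutive away games, and each block is a closed trip that leaves $x$'s home, visits at most $k$ foreign venues, and returns home. Reading $T_x$ as the concatenation of these closed trips, I obtain a set of simple cycles, each passing through a point at the depot position and visiting at most $k$ non-depot teams of $J$; since $T_x$ visits every team of $J$, these cycles together cover every non-depot vertex of $J$.

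Next I would shortcut. Recall that $J$ is obtained from $I$ by adding vertices on the depot position, so every vertex of $J$ that is not an original non-depot vertex of $I$ sits exactly at $o$. In each closed trip of $x$, delete all such depot-position vertices, keeping only $x$ itself and the original non-depot vertices of $I$ that the trip visits; a trip that visited no original non-depot vertex is simply discarded. Because $x$ lies at the depot position, each surviving trip is a simple cycle through $o$ containing at most $k$ non-depot vertices of $I$, and collectively they still cover all $n-1$ non-depot vertices of $I$. Hence this is a feasible solution $S_x(I)$ of $k$-TC on $I$ — equivalently of Restricted $k$-TC, since the weight function (restricted to $V_I$) is inherited unchanged. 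By the triangle inequality the deletions do not increase the weight, and in fact, since $x$ and all deleted vertices coincide with $o$, one gets $S_x(I)=w(T_x)$.

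Finally I would compute $S_x(I)$ for every $x\in D_J$ and return the one of smallest weight as $S(I)$. Since $\size{D_J}=m(m^2-1)=O(n^3)$ and each $S_x(I)$ is built from $T_x$ in (essentially) linear time, the whole procedure runs in polynomial time. I expect the only delicate point to be the first step: carefully arguing that the itinerary of a team in any feasible TTP-$k$ schedule really does decompose into home-rooted trips of length at most $k$ (this is precisely where bounded-by-$k$ and direct-traveling are invoked), and checking that the shortcutting step leaves no non-depot vertex of $I$ uncovered and produces genuinely simple cycles through $o$.
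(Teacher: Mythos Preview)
Your proposal is correct and follows essentially the same approach as the paper: for each dummy team $x\in D_J$, shortcut all depot-position vertices (other than $x$ itself) from its itinerary $T_x$ to obtain a $k$-TC solution $S_x(I)$ of weight $w(T_x)$, and then return the best among the $m(m^2-1)$ candidates. Your write-up is in fact more explicit than the paper's about why the bounded-by-$k$ and direct-traveling constraints force the required decomposition into depot-rooted tours of at most $k$ non-depot vertices.
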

\begin{proof}
Consider a dummy team $x\in D_J$. By shortcutting all vertices on the depot except for itself, we obtain a solution of $I$, denoted by $S_x(I)$, since instance $J$ is obtained by adding some vertices on the depot of $I$ (see Figure~\ref{fig01}). Note that $S_x(I)=w(T_x)$ since $x$ is on the depot. We can get a solution $S(I)$ of instance $I$ by choosing the one from $\{S_x(I)\mid x\in D_J\}$ with the minimum weight, which clearly takes polynomial time.
\end{proof}

\begin{lemma}\label{l3.2}
There exists a constant $\beta>0$ such that 
\[
m(m^2-1)\cdot (S(I)-\OPT(I))\leq\beta\cdot(S(J)-\OPT(J)).
\]
\end{lemma}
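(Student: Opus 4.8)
The plan is to combine the two inequalities already established in this section with the second property (Lemma~\ref{l2}). From the construction of $S(I)$ we have the lower bound $S(J)\geq m(m^2-1)\cdot S(I)+2d(m^2-1)\sum_{x\in V_I}w(o,x)$. From the analysis of the quality of the schedule in Section~\ref{condition2} we have the matching upper bound $\OPT(J)\leq m(m^2-1)\cdot\OPT(I)+2d(m^2-1)\sum_{x\in V_I}w(o,x)+(4m+6)\sum_{x\in V_I}w(o,x)$. Subtracting the second from the first, the troublesome term $2d(m^2-1)\sum_{x\in V_I}w(o,x)$ cancels, leaving
\[
S(J)-\OPT(J)\geq m(m^2-1)\cdot(S(I)-\OPT(I))-(4m+6)\sum_{x\in V_I}w(o,x).
\]
So it would suffice to absorb the additive error term $(4m+6)\sum_{x\in V_I}w(o,x)$ into a constant multiple of $S(J)-\OPT(J)$, which is exactly where the ``boosted'' factor $m(m^2-1)$ earns its keep.

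First I would dispose of the trivial case $S(I)=\OPT(I)$: then the left side of the claimed inequality is $0$ and the right side is nonnegative (since $S(J)\geq\OPT(J)$), so the bound holds for any $\beta>0$. Next, assuming $S(I)\neq\OPT(I)$, I would invoke the integrality of the weight function in Restricted $k$-TC (Definition~\ref{restr}): since all weights are non-negative integers and $I'$ is obtained from $I$ only by adding zero-weight vertices, $S(I)$ and $\OPT(I)$ are integers, hence $S(I)-\OPT(I)\geq 1$. Therefore $m(m^2-1)\cdot(S(I)-\OPT(I))\geq m(m^2-1)$. Meanwhile, by Lemma~\ref{bounded} (applied to $I$, noting $V_{I'}$ and $V_I$ have the same $w(o,\cdot)$ sum) we have $\sum_{x\in V_I}w(o,x)=\Theta(n)=\Theta(m)$, so the additive error satisfies $(4m+6)\sum_{x\in V_I}w(o,x)=O(m^2)=o(m(m^2-1))$. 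Concretely there is a constant $c$ with $(4m+6)\sum_{x\in V_I}w(o,x)\leq c\cdot m^2\leq \tfrac12\, m(m^2-1)$ for all sufficiently large $m$ (and the finitely many small $m$ can be handled by enlarging $\beta$, or simply do not arise since $m=\Theta(n)$ grows with the instance).

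Combining these: the additive error is at most $\tfrac12\,m(m^2-1)\leq \tfrac12\,m(m^2-1)\cdot(S(I)-\OPT(I))$, so the displayed inequality above gives
\[
S(J)-\OPT(J)\geq \tfrac12\, m(m^2-1)\cdot(S(I)-\OPT(I)),
\]
i.e.\ the lemma holds with $\beta=2$. I would present the argument cleanly by first writing the two bracketing inequalities, subtracting, and then splitting into the two cases; the only point requiring a little care is making the ``$o(m(m^2-1))$'' step rigorous, i.e.\ converting the asymptotic statement $\sum_{x\in V_I}w(o,x)=\Theta(m)$ from Lemma~\ref{bounded} into an explicit bound that is eventually dominated by $\tfrac12 m(m^2-1)$; this is the main (and essentially only) obstacle, and it is mild because the boosting factor is cubic in $m$ while the error is only quadratic. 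Once this lemma is combined with Lemmas~\ref{l1}, \ref{l2}, and \ref{l3.1}, all three properties of Definition~\ref{L-reduction} hold and Theorem~\ref{main} follows.
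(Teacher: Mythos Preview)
Your proposal is correct and follows essentially the same approach as the paper: subtract the lower bound on $S(J)$ from the upper bound on $\OPT(J)$ to cancel the $2d(m^2-1)\sum_x w(o,x)$ term, handle the trivial case $S(I)=\OPT(I)$ separately, and then use the integer-weight property of Restricted $k$-TC to get $S(I)-\OPT(I)=\Omega(1)$ so that the remaining $\Theta(n^2)$ error is absorbed by the $\Theta(n^3)$ main term. The only cosmetic difference is that you make the constant explicit ($\beta=2$ for large $m$) whereas the paper leaves it as an existence statement via the $\Theta$-notation.
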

\begin{proof}
By the proof of Lemma~\ref{l3.1}, we have
$
\sum_{x\in D_J}w(T_x)=\sum_{x\in D_J}S_x(I)\geq m(m^2-1)\cdot S(I).
$

Next, we consider a non-dummy team $x\in V_J\setminus D_J$. In its itinerary $T_x$, it uses at least $m(m^2-1)/k=d(m^2-1)$ trips to visit all dummy teams. So, we can get $w(T_x)\geq 2d(m^2-1)w(o,x)$ by the triangle inequality. Then, we have 
$
\sum_{x\in V_J\setminus D_J}w(T_x)\geq 2d(m^2-1)\sum_{x\in V_{I}}w(o,x).
$

Hence, we have 
\begin{align*}
S(J)&=\sum_{x\in D_J}w(T_x)+\sum_{x\in V_J\setminus D_J}w(T_x)\\
&\geq m(m^2-1)\cdot S(I)+2d(m^2-1)\sum_{x\in V_{I}}w(o,x).
\end{align*}

%Previously, we have shown that $S(J)\geq m(m^2-1)\cdot S(I)+2d(m^2-1)\sum_{x\in V_{I}}w(o,x)$. 
By the proof of Lemma~\ref{l2}, we have $\OPT(J)\leq m(m^2-1)\cdot \OPT(I)+2d(m^2-1)\sum_{x\in V_{I}}w(o,x)+(4m+6)\sum_{x\in V_{I}}w(o,x)$. Then,
\[
S(J)-\OPT(J)\geq m(m^2-1)\cdot (S(I)-\OPT(I))-(4m+6)\sum_{x\in V_{I}}w(o,x).
\]

We assume w.l.o.g.\ that $S(I)>\OPT(I)$; otherwise, the lemma holds trivially. By Definition~\ref{restr}, we have $S(I)-\OPT(I)=\Omega(1)$. So, $m(m^2-1)\cdot (S(I)-\OPT(I))=\Omega(n^3)$. By (\ref{m=tn}) and (\ref{m=tn+}), we have $(4m+6)\sum_{x\in V_{I}}w(o,x)=\Theta(n^2)$. Then, we have
\begin{align*}
&m(m^2-1)\cdot (S(I)-\OPT(I))-(4m+6)\sum_{x\in V_{I}}w(o,x)\\
&=\Theta(m(m^2-1)\cdot (S(I)-\OPT(I))).    
\end{align*}
So, there exists a constant $\beta>0$ such that $m(m^2-1)\cdot (S(I)-\OPT(I))\leq\beta\cdot(S(J)-\OPT(J))$.
\end{proof}

Theorem~\ref{main} directly follows from Lemmas \ref{l1}, \ref{l2}, \ref{l3.1} and \ref{l3.2}.

\section{Conclusion}
Previously, it was known that TTP-$k$ allows a PTAS when $k=2$ and is APX-hard when $k\geq n-1$ (i.e., the unconstrained TTP). We clarified the complexity status by showing that TTP-$k$ is APX-hard for any fixed $k\geq 3$. 
Our hardness result, instead of using $k$-SAT, is derived from a reduction from $k$-TC, which also provides an alternative proof for NP-hardness of TTP-$k$ with any fixed $k\geq 3$. 
We remark that our APX-hardness result may not extend to geometric settings, e.g., Euclidean distances. Whether a PTAS exists for TTP-$k$ under Euclidean distances is still unknown.

\begin{CJK}{UTF8}{min}
\bibliographystyle{plain}
\bibliography{main}
\end{CJK}

\newpage
\appendix

\section{Proof of Lemma~\ref{apxx}}\label{ADX}
\begingroup
\def\thelemma{\ref{apxx}}
\begin{lemma}[\cite{asano1996covering,asano1996covering+}]
$k$-TC is APX-hard for any fixed $k\geq 3$.
\end{lemma}
\begin{proof}
Our proof is based on the proof in~\cite{asano1996covering+}, but we provide more details, e.g., the proofs of Claims~\ref{c1} and \ref{c3}.

The reduction is from 
%the Maximum $H$-Matching-$B$ problem, where given a graph $G$ with maximum degree at most $B$, find a set of vertex-disjoint $H$ in $G$ with the largest possible cardinality. 
a well-known APX-hard problem~\cite{DBLP:journals/ipl/Kann94}, named Maximum $k$-Path Packing on Bounded-Degree Graphs, i.e., given a connected graph $G=(V,E)$ with maximum degree at most some constant related to $k$, find a set of vertex-disjoint $k$-paths in $G$ with the largest possible cardinality.
Assume that $V=\{z_1,...,z_n\}$. Given $G=(V,E)$, the instance of $k$-TC, denoted by $I=(V_I,E_I)$ is constructed as follows.

First, $V_I$ consists of a depot vertex $o$, $n$ vertices $v_1,...,v_n$, and $n$ \emph{multiple} vertices $u_1,...,u_n$, where each $u_i$ has a multiplicity of $k-1$, i.e., $k-1$ vertices with zero interdistance. Thus, $\size{V_I}=1+nk$. Second, for edges $ou_i, u_iv_i, v_iv_j$ with $i,j\in\{1,...,n\}$ and $i\neq j$, the weights are defined as follows: $w(o,u_i)=\frac{1}{4}$, $w(u_i,v_i)=\frac{3}{4}$, and $w(v_i,v_j)=\frac{3k-7}{2k-4}$ if $z_iz_j\in E$. Finally, for other edges, the weights are defined as the shortest path weights generated from the weights defined above. Specifically, since $\frac{3k-7}{2k-4}\geq 1$ when $k\geq 3$, we have $w(o,v_i)=1$ for each $i\in\{1,...,n\}$, $w(v_i,v_j)=2$ for each $z_iz_j\notin E$, and $w(u_i,u_j)=\frac{1}{2}$ and $w(u_i,v_j)=\frac{5}{4}$ for $i,j\in\{1,...,n\}$ and $i\neq j$.

A \emph{straight tour} is a tour from $o$ to some $v_i$ that picks the $k-1$ copies of $u_i$ on its way; a \emph{mini-tour} is a tour from $o$ to some $u_i$ that picks its $k-1$ copies only; and an \emph{$h$-chain tour}, where $h\leq k$, is a tour based on an $h$-path $z_{i_1}...z_{i_h}$ in $G$ and visits $v_{i_1},...,v_{i_h}$ in this order, without picking any $u_{i_j}$.

A solution to $I$ is called \emph{canonical} if it consists only of straight tours and $h$-chain tours with accompanying mini-tours, where $h\leq k$, and each $v_i$ or all copies of $u_i$ are contained in only one tour.

\begin{claim}\label{c1}
Any solution to $I$ can be transformed in polynomial time into a canonical solution without increasing the weight.    
\end{claim}
\begin{proof}
By shortcutting, we assume w.l.o.g.\ that the solution to $I$ consists of a set of simple tours $\mathcal{T}$. Moreover, each $v_i$ is contained in only one tour in $\mathcal{T}$ by simply skipping $v_i$ in all but one of the tours.
Let 
\[
E^*_I\coloneqq\{ou_i,u_iv_i\mid i\in\{1,...,n\}\}\cup \{v_iv_j\mid z_iz_j\in E\}.
\]
%We use $G[E^*_I]$ to denote the subgraph of $G$ induced by $E^*_I$. By definition, the weight function $w$ is based on the weights of the edges in $E^*_I$. Specifically, for any edge $xy\in E$, $w(x,y)$ equals the weight of the shortest path between $x$ and $y$ in $G[E^*_I]$. 

Suppose that there exists a simple tour $T\in\T$ that does not fall into any of the three defined types. %We now describe how to transform $T$ into a collection of tours of the three specified kinds, without increasing the total weight.
If $T$ contains an edge $xy\notin E^*_I$ with $x\neq o$ and $y\neq o$, we replace $xy$ with $ox$ and $oy$. By definition, $w(x,y)=w(o,x)+w(o,y)$. Thus, in this case, $T$ can be replaced with two simple tours without increasing the weight. 
Thus, we assume that $T$ contains no such edges. 
Moreover, since $T$ is a simple tour, it must be in the form of $ou_io$ or $ou_iv_io$ or $ou_iv_i...v_jo$ or $ou_iv_i...v_ju_jo$ or $ov_io$ or $ov_i...v_jo$.
Note that each tour of the three types falls into one of the above forms.
Therefore, each tour in $\mathcal{T}$ falls into one of the above forms.

Initialize $\mathcal{T}^*\coloneqq \emptyset$. 
For each tour in $\mathcal{T}$,
\begin{itemize}
    \item if it is in the form of $ou_io$, we select a mini-tour $ou_io$ into $\mathcal{T}^*$; 
    \item if it is in the form of $ou_iv_io$ or $ov_io$, we select a straight tour $ou_iv_io$ into $\mathcal{T}^*$; 
    \item if it is in the form of $ou_iv_i...v_ju_jo$ or $ou_iv_i...v_jo$ or $ov_i...v_jo$, we select an $h$-chain tour $ov_i...v_jo$ into $\mathcal{T}^*$.
\end{itemize}
Clearly, $\mathcal{T}^*$ can be obtained in polynomial time and consists only of tours of the three types. Moreover, since $w(o,v_i)=w(o,u_i)+w(u_i,v_i)$ by definition, we have $w(\mathcal{T}^*)=w(\mathcal{T})$.

Note that each $v_i$ is still contained in only one tour in $\mathcal{T}$. Thus, each $u_i$ is contained in at least one tour of the form $ou_io$ or $ou_iv_io$ in $\mathcal{T}$ because a tour visiting more vertices cannot serve all $k-1$ copies of $u_i$ alone. Hence, all $k-1$ copies of $u_i$ are served by either a mini-tour or a straight tour in $\mathcal{T}^*$. 
Moreover, since $\mathcal{T}$ is a solution to $I$ and $\mathcal{T}^*$ is obtained by skipping or adding some $u_i$ in certain tours of $\mathcal{T}$, each $v_i$ is also contained in only one tour in $\mathcal{T}^*$.
Thus, $\mathcal{T}^*$ is a feasible solution. 

By shortcutting, we can ensure that $\mathcal{T}^*$ consists only of straight tours and $h$-chain tours with accompanying mini-tours, where $h\leq k$, and each $v_i$ or all copies of $u_i$ are contained in only one tour. Therefore, a canonical solution can be obtained in polynomial time with a weight of at most $w(\mathcal{T})$.
% Given a simple tour $T$, if it contains an edge $xy\notin E^*_I$, we transform the edge into the shortest path between $x$ and $y$ in $G[E^*_I]$. Thus, it can be transformed into a multi-graph containing only edges in $E^*_I$, which clearly can be decomposed into a set of sub-tours with each in the form of $ou_io$ or $ou_iv_iu_io$ or $ou_iv_i...v_ju_jo$.
% Initialize $\mathcal{T}\coloneqq \emptyset$.
% For each sub-tour in the form of $ou_io$, we select a mini-tour $ou_io$ into $\mathcal{T}$; for each sub-tour in the form of $ou_iv_iu_io$, we select a straight tour $ou_iv_io$ into $\mathcal{T}$; for each sub-tour in the form of $ou_iv_iu_io$, we select a straight tour $ou_iv_io$ into $\mathcal{T}$;
\end{proof}

%Note that by shortcutting, we assume w.l.o.g.\ that any canonical solution consists only of straight tours and $h$-chain tours with accompanying mini-tours, where $h\leq k$.

\begin{claim}\label{c2}
Any canonical solution to $I$ can be transformed in polynomial time into a new canonical solution that consists only of straight tours and $k$-chain tours with accompanying mini-tours, without increasing the weight.
\end{claim}
\begin{proof}
%Next, we show that any $h$-chain tour with $h\leq k-1$, together with the associated $h$ mini-tours can be replaced with the $h$ straight tours that cover the same set of vertices, without increasing the weight.
We only need to compare an $h$-chain tour, together with the associated $h$ mini-tours, to the $h$ straight tours that cover the same set of vertices. The weight of the $h$-chain tour plus mini-tours is $2+(h-1)\frac{3k-7}{2k-4}+\frac{h}{2}$ while the weight of the straight tours is $2h$. 
Notably, we have 
\begin{align*}
&2+(h-1)\frac{3k-7}{2k-4}+\frac{h}{2}-2h=\frac{k-1-h}{2k-4}.
\end{align*}
Thus, we have $2+(h-1)\frac{3k-7}{2k-4}+\frac{h}{2}\geq 2h$ when $h\leq k-1$, and $2+(h-1)\frac{3k-7}{2k-4}+\frac{h}{2}< 2h$ when $h=k$. %Therefore, there exists an optimal solution that uses an $h$-chain tour only when $h=k$.

Therefore, any $h$-chain tour with $h\leq k-1$, together with the associated $h$ mini-tours can be replaced with the $h$ straight tours that cover the same set of vertices, without increasing the weight.
\end{proof}

%that contains only tours of the above three types without increasing the weight. 
Suppose that $OPT(G)=l$, i.e., the maximum number of vertex-disjoint $k$-paths in $G$ is $l$.
By Claims~\ref{c1} and \ref{c2}, there exists an optimal solution to $I$ that consists of $l$ $k$-chain tours and $lk$ mini-tours corresponding to the $l$ $k$-paths in $G$, as well as $n-lk$ straight tours, i.e., $OPT(I)=l(2+\frac{(3k-7)(k-1)}{2k-4})+\frac{lk}{2}+2(n-lk)$.

Suppose that $k$-TC admits a PTAS. By Claims~\ref{c1} and \ref{c2}, a $(1+\varepsilon)$-approximate solution to $I$ can be modified in polynomial time into a canonical solution corresponding to $l'$ $k$-paths in $G$ such that $l'(2+\frac{(3k-7)(k-1)}{2k-4})+\frac{l'k}{2}+2(n-l'k)\leq (1+\varepsilon)\cdot OPT(I)$, i.e.,
\begin{align*}
&l'\lrA{2+\frac{(3k-7)(k-1)}{2k-4}}+\frac{l'k}{2}+2(n-l'k)\\
&\leq (1+\varepsilon)\cdot\lrA{l\lrA{2+\frac{(3k-7)(k-1)}{2k-4}}+\frac{lk}{2}+2(n-lk)}.
\end{align*}
Since $2+\frac{(3k-7)(k-1)}{2k-4}+\frac{k}{2}-2k=\frac{-1}{2k-4}$, we have 
\begin{align*}
2n-\frac{l'}{2k-4}\leq (1+\varepsilon)\cdot\lrA{2n-\frac{l}{2k-4}}.
\end{align*}
Thus,
\begin{equation}\label{eeeeq}
l'\geq (1+\varepsilon)\cdot l-\varepsilon\cdot (4k-8)\cdot n>l-\varepsilon\cdot (4k-8)\cdot n.
\end{equation}

\begin{claim}\label{c3}
It holds that $l=\Omega(n)$.
\end{claim}
\begin{proof}
Since $G$ is a connected graph with bounded degree, say $\Delta$, we construct a rooted spanning tree $F$ of $G$, which contains at least $\log_\Delta n$ levels. Thus, if $n\geq \Delta^k$, there exists a $k$-path $v_1...v_k$ from $F$ such that $v_k$ is a leaf at the last level and $v_1$ lies $k-1$ levels above $v_k$. Let $F_v$ denote the sub-tree of $F$ rooted at $v$.

Given $F$, we greedily select a $k$-path $v_1...v_k$ from $F$ such that $v_k$ is a leaf at the last level and $v_1$ lies $k-1$ levels above $v_k$, and update $F$ by deleting the sub-tree $F_{v_1}$.
Note that $F_{v_1}$ contains at most $\Delta^k$ vertices. 
Thus, the number of deleted vertices is at most $\Delta^k$ every time. So, there are at least $\frac{n-\Delta^k}{\Delta^k}$ vertex-disjoint $k$-paths in $F$. Since $\Delta$ and $k$ are constants, we have $l=\Omega(n)$.
\end{proof}

By Claim~\ref{c3}, there exist constants $\gamma,n_0>0$ such that $n\leq \gamma\cdot l$ for all $n\geq n_0$. Thus, when $n\geq n_0$, by (\ref{eeeeq}), we have
\[
l'\geq l-\varepsilon\cdot (4k-8)\cdot \gamma\cdot l=(1-\varepsilon\cdot(4k-8)\cdot\gamma)\cdot l,
\] 
which implies a PTAS for the APX-hard problem. Thus, $k$-TC is APX-hard for any fixed $k\geq 3$.
\end{proof}

\end{document}